%
\documentclass[runningheads]{llncs}

\usepackage{mathtools}
\usepackage{amsmath}
\usepackage{hyperref}
\usepackage{xcolor}
\usepackage{amssymb}
\usepackage{stmaryrd}
\usepackage{bm}
\usepackage{dsfont}
\usepackage{graphicx}
\usepackage{tikz}
\usepackage{booktabs}
\usepackage{algorithm}
\usepackage{multirow}
\usepackage{rotating}
\usepackage{array}
\usepackage{algpseudocode}
\usepackage[T1]{fontenc}
\usepackage{pgfplots}
\pgfplotsset{compat=1.18}
\usetikzlibrary{arrows.meta}
\usetikzlibrary{shapes.geometric}
\usetikzlibrary{automata, positioning}

\newcommand{\indi}[1]{\mathds{1}\{#1\}}

\newcommand{\einit}{\hat{\lambda}}
\newcommand{\tinit}{\lambda^{\star}}

\newcommand{\prob}{\mathbb{P}}
\newcommand{\expe}{\mathbb{E}}

\newcommand{\NN}{\mathbb{N}}
\newcommand{\pNN}{\mathbb{N}^+}
\newcommand{\RN}{\mathbb{R}}

\newcommand{\escore}{E}

\newcommand{\lb}{L}
\newcommand{\ub}{U}

\newcommand{\clb}{l}
\newcommand{\cub}{u}

\newcommand{\prop}{\varphi}

\newcommand{\sdistr}{\Delta}

\newcommand{\dist}{d}

\newcommand{\bfoo}{\beta}
\newcommand{\afoo}{\alpha}

\newcommand{\score}{\ell}


\newcommand{\conf}{\delta}
\newcommand{\error}{\varepsilon}

\newcommand{\env}{\theta}

\newcommand{\pol}{\pi}

\newcommand{\evar}{N}

\newcommand{\tenv}{\env^{\star}}
\newcommand{\trm}{P}
\newcommand{\etrm}{\hat{P}}
\newcommand{\ttrm}{P^{\star}}
\newcommand{\tmdp}{\mdp^{\star}}

\newcommand{\inter}{I}

\newcommand{\rQ}{Q}

\newcommand{\rf}{\mathrm{ref}}

\newcommand{\aS}{\mathcal{S}}

\newcommand{\rS}{S}
\newcommand{\cS}{s}

\newcommand{\aA}{\mathcal{A}}

\newcommand{\cA}{a}

\newcommand{\aO}{\mathcal{O}}

\newcommand{\rV}{V}
\newcommand{\cV}{v}

\newcommand{\rW}{W}
\newcommand{\cW}{w}

\newcommand{\aX}{\mathcal{X}}

\newcommand{\rX}{X}
\newcommand{\cX}{x}

\newcommand{\cY}{y}

\newcommand{\reY}{\hat{Y}}
\newcommand{\ceY}{\hat{y}}

\newcommand{\rtY}{Y^{\star}}
\newcommand{\ctY}{y^{\star}}

\newcommand{\aZ}{\mathcal{Z}}

\newcommand{\rZ}{Z}
\newcommand{\cZ}{z}

\newcommand{\emdp}{\hat{\mdp}}
\newcommand{\eenv}{\hat{\env}}

\newcommand{\wscore}{F}

\newcommand{\loss}{\ell}

\newcommand{\moni}{\mathbf{M}}

\newcommand{\sgn}{\sigma}

\newcommand{\mdp}{\mathcal{M}}







%
\begin{document}
\title{Alignment Monitoring}
%
%
\author{
Thomas A. Henzinger \inst{2}\orcidID{0000-0002-2985-7724} \and 
Konstantin Kueffner\inst{2}\thanks{Part of the work was conducted during an internship at NVIDIA.}\orcidID{0000-0001-8974-2542} \and
Vasu Singh \inst{3}\orcidID{0009-0003-5468-7899}\and
I Sun\inst{1,3}\orcidID{0009-0009-2194-9526} }
\authorrunning{T. Henzinger et al.}
%
\institute{University of Illinois Urbana-Champaign, Illinois, USA, \email{is16@illinois.edu} \and
ISTA, Klosterneuburg, Austria, \email{\{tah, kkeuffner\}@ista.ac.at} \and 
NVIDIA, Santa Clara, USA
\email{{vasus@nvidia.com}}}

\maketitle              
%

\begin{abstract}
Formal verification provides assurances that a probabilistic system
satisfies its specification---\emph{conditioned} on the system model being
aligned with reality.
We propose \emph{alignment monitoring} to watch that this assumption is
justified.
We consider a probabilistic model well aligned if it accurately predicts
the behaviour of an uncertain system in advance.
An \emph{alignment score} measures this by quantifying the similarity between
the model's predicted and the system's (unknown) actual distributions. 
An alignment monitor observes the system at runtime;
at each point in time it uses the current state and the model to predict the
next state.
After the next state is observed, the monitor updates the verdict,
which is a high-probability interval estimate for the true alignment score.
We utilize tools from sequential forecasting to construct our alignment
monitors.
Besides a monitor for measuring the expected alignment score,
we introduce a differential alignment monitor,
designed for comparing two models,
and a weighted alignment monitor,
which permits task-specific alignment monitoring.
We evaluate our monitors experimentally on the PRISM benchmark suite.
They are fast, memory-efficient, and detect misalignment early.
\keywords{Model alignment  \and Runtime verification \and Statistical monitoring.}
\end{abstract}

\section{Introduction}

Probabilistic models such as Markov chains or Markov decision processes (MDPs) can be used to capture uncertainties in a system, for example, about the distribution of inputs.
While many of these models are amenable to formal verification, the verification results are only as good as the models themselves:
if some parameters of the actual system are different from the assumptions made by the probabilistic model, or change over time, then the properties of the model become detached from reality.
Hence, we must ensure that a probabilistic model is an accurate abstraction of the deployed system.
This assurance can only be given by a runtime monitor.

We take the position that all models are ``wrong,''~\cite{box1976science} irrespective of whether we obtain our model from an expert or by learning it from data.
But some models are useful, and we formalize the ``usefulness'' of a model by its power to accurately predict how the system evolves. 
In this paper, we test the assumption that \emph{a given probabilistic model accurately predicts the next state of a system} by constructing quantitative runtime monitors.
In our setting, the system's states are fully observable, but its probabilistic next-state function is unknown. 

\paragraph{Alignment monitoring.}
An alignment monitor outputs, at every point in time, an ``alignment score'' that asserts how well-aligned the model is to the system, by measuring the prediction quality of the model. 
For stochastic models, the prediction quality for a given model and system state translates into some similarity measure between the model's and the system's successor distributions.
Hence, the model's alignment score w.r.t.\ the current history must be some function mapping a sequence of pairs of distributions to a real value.
However, without access to the system's distributions, this history is hidden and must be estimated from observations.
Fortunately, we get these samples for free during runtime.
At every point in time, we interpret the model's current successor distribution as a predictor for the next system state, which will be observed once the system transitions from the current to the next state. 
In alignment monitoring this sequence of predictions and observed states is the observed history and the input to the monitor. 
The observed states are realisations of the system's successor distributions, which are unknown but required for computing the alignment score.
The monitor computes an interval estimate from the {\em observed history} to estimate the model's alignment score evaluated on the {\em hidden history}.
We require that the sequence of interval estimates is correct at all times with high probability.

\paragraph{Average alignment monitor.}
In sequential forecasting scoring rules are used to quantify the quality of a prediction compared to an observation.
The observed outcome is a sample from an unknown true distribution.
The expected score w.r.t.\ to the true distribution provides a measure of how well the prediction is aligned with the true distribution.
This is generalized to the sequential setting by taking the average over all predictions.
This defines the {\em average expected score} (AES).
The AES can be estimated from a sequence of predictions and observations using tools from sequential forecasting, where this problem is well-known~\cite{choe2021comparing}. 
The only assumption required is that the scoring rule be bounded and that the true distribution be determined before the prediction is made. 
We utilize these tools to construct a monitor that outputs, at every point in time, the average expected alignment score. 

\paragraph{Extensions.}
We present two extensions to the average alignment monitor.
The \emph{weighted alignment monitor} allows the user to weigh predictions and outcomes depending on their importance for a particular task.
For example, in probabilistic systems bottom strongly connected components (BSCCs), i.e., sets of states from which the system can never escape, are important, because once the system is in a safe BSCC, we are guaranteed to be safe indefinitely~\cite{baier2008principles}. Therefore, if the monitor detects a transition away from a state that, according to the model, should be inside a BSCC, the safety guarantee of the system is in jeopardy. Hence, such transition should be of greater importance when assessing alignment. We can encode this using the weights of our weighted alignment monitor.
Second, the \emph{differential alignment monitor} allows the user to incorporate a reference model when evaluating the tested model's alignment.
The monitor observes both the tested model's and the reference model's predictions, and computes interval estimates for the average expected score difference between the two models.
If the reference model is already trusted, but may be overly conservative, a differential alignment monitor can be used to decide whether a tested model is better or worse on than the trusted reference model on the observed history.

\paragraph{Experiments.}
We evaluate our average and differential alignment monitor on slightly modified versions of the discrete-time Markov chains from the PRISM benchmark suit~\cite{kwiatkowska2012prism}. 
The computation time of our average alignment monitor depends linearly on the input dimension, i.e., the support of the prediction, but is constant w.r.t. the history. Even if the input dimension is large, e.g., $10^6$, our monitor is fast, requiring roughly $260\mu s$ per iteration. We show that the differential alignment monitor can often decide which model is better after a few 100 observations. We evaluate our weighted alignment monitor on two toy examples, demonstrating its applicability to problems in formal verification.

\paragraph{Contributions.}
\begin{itemize}
\item We introduce and formalise the problem of alignment monitoring.
\item We use tools from Choe et al.~\cite{choe2021comparing} to develop a runtime monitor that tracks the average expected alignment score using high-probability interval estimates.
\item We present weighted and differential alignment monitoring as extension.
\end{itemize}

\section{Preliminaries}
Let $\NN$ be the set of natural numbers, $\pNN$ be the set of natural numbers excluding zero, $\RN$ be the set of real numbers. Let $a,b\in \NN$ such that $a<b$.
We define $[a;b]\coloneqq \{a, a+1,\dots b\}$ as the interval from $a$ to $b$ over the natural numbers and as a shorthand we will use $[b]\coloneqq [1;b]$. Let $\mathit{R}\subseteq\RN$ be a subset of the real-numbers, we denote the set of all interval over $\mathit{R}$ as $\inter(\mathit{R})$.
Given a set countable $\aZ$ and $n\in \NN$ we denote the set of all sequences of length $n$ as $\aZ^n$. The set $\aZ^*\coloneqq \bigcup_{n\in \NN} \aZ^n$ denotes the set of all finite sequences. We define $\sdistr(\aZ)$ as the set of all probability distributions over $\aZ$.

\section{Alignment Monitoring}
\label{sec:problem}
In this section we present the alignment monitoring problem.

\begin{example}
\label{ex:mdp0}
    In quantitative model checking we are given a model of the environment represented as a Markov decision process (MDP) $\emdp\coloneqq (\aS, \aA, \etrm,\einit )$ consisting of a state space $\aS$, a set of actions 
$\aA$, a transition probability function $\etrm\colon \aS\times \aA \to \sdistr(\aS)$ where $\sdistr(\mathcal{Z})$ denotes the set of all probability distributions over a set $\mathcal{Z}$, and an initial distribution $\einit\in \sdistr(\aS) $. 
The nondeterminism introduced by the actions is resolved by a policy $\pol\colon \aS \times \aA$, resulting in a Markov chain (MC). Given a specification $\prop$ represented as an LTL formula, the probability that the policy satisfies the property w.r.t.\ the model is computed, i.e., $\prob_{\pol}^{\emdp}(\rS \models \prop)$. This assures that the system, the environment and the policy, adhere to the specification. The setting above is fundamental to much of the work done in formal verification~\cite{baier2008principles}. It relies on the assumption that model $\emdp\coloneqq (\aS, \aA, \etrm,\einit )$ is aligned with the \emph{actual} environment $\tmdp\coloneqq (\aS, \aA, \ttrm,\tinit )$. 
\end{example}
 We choose to measure the model's alignment with reality based on its predictive capabilities, e.g., how well can it predict the next state.
\begin{example}
    \label{ex:coin}
    Let $p^{\star}$ be the bias of an unknown coin and let $\hat{p}$ be the bias of the model coin given to us. 
    We would consider the model coin well aligned with the actual coin, if $\hat{p}$ and $p^{\star}$ are close, e.g., if $| \hat{p}-p^{\star}|$ is small.
\end{example}
\begin{example}[Ex.~\ref{ex:mdp0} cont.]
    \label{ex:mdp}
    We consider the model $\emdp$ well aligned with reality $\tmdp$ for a state-action pair $(\cS, \cA) \in (\aS\times \aA)$, if the respective successor distributions are ``close'', i.e., if $\etrm(\cS, \cA) \approx \ttrm(\cS, \cA)$ for some notion of similarity. 
\end{example}

\paragraph{Setting.}
Our objective is to monitor the alignment of the model with the environment. 
In monitoring, we are limited to watching reality unfold one step at a time, e.g., the monitor observes the state-action pairs generated by an agent interacting with the environment. We can model this as a stochastic process $\rX\coloneqq (\rX_t)_{t\in \pNN}$ over a given state space $\aX$, e.g., the joint state-action space $\aS\times \aA$. 
The stochastic process is defined by the environment $\tenv\colon \aX^* \to \sdistr(\aX)$ which is modelled as $\eenv\colon \aX^* \to \sdistr(\aX)$.
Together, the environment and the model define two stochastic processes, the hidden process $\rV\coloneqq (\rV_t)_{t\in \pNN} = (\reY_t, \rtY_t)_{t\in \pNN}$ and the observed process $\rW\coloneqq (\rW_t)_{t\in \pNN} = (\reY_t, \rX_t)_{t\in \pNN}$ where for every $t\in \pNN$ 
\begin{align*}
  \reY_t  = \eenv(\rX_1, \dots, \rX_{t-1}) \quad \text{and} \quad \rX_t \sim \rtY_t = \tenv(\rX_1, \dots, \rX_{t-1}).
\end{align*}
The hidden history $\cV\coloneqq \cV_1, \dots, \cV_t$ at time $t$ is a finite realisation of $\rV_1, \dots, \rV_t$ consisting of model's predictions and environment's probability distribution. 
The observed history $\cW\coloneqq \cW_1, \dots, \cW_t$ at time $t$ is a finite realisation of $\rW_1, \dots, \rW_t$ consisting of model's predictions and observed states.
We can summarise the dynamics as follows. At time $t$ the environment decides on the distribution $\ctY_t=\tenv(\cX_1, \dots, \cX_{t-1})$, a prediction is made using the model $\ceY_t=\eenv(\cX_1, \dots, \cX_{t-1})$, after which the next state $\cX_{t}$ is obtained by sampling from $\ctY_t$.


\begin{example}[Ex.~\ref{ex:coin} cont.]
    \label{ex:coin2}
    For the coins $p^{\star}$ and $\hat{p}$, the observed process $(\reY_t, \rX_t)_{t\in \pNN}$ and the hidden process $(\reY_t, \rtY_t)_{t\in \pNN}$ are defined for every $t\in \pNN$ such that $\rtY_t=\mathrm{Bernoulli}(p^{\star})$ and $\reY_t=\mathrm{Bernoulli}(\hat{p})$ a.s., and $\rX_t\sim \rtY_t$.
\end{example}

\begin{example}[Ex.~\ref{ex:mdp} cont.]
    \label{ex:mdp2}
    Because $\pol$ is deterministic we focus only on the states, i.e., $\aX= \aS$. The observed process $(\reY_t, \rX_t)_{t\in \pNN}$ and the hidden process $(\reY_t, \rtY_t)_{t\in \pNN}$ are defined for every $t\in \pNN$ such that $\reY_{t+1}= \etrm(\rS_t, \pol(\rS_t))$, $\rtY_{t+1}= \ttrm(\rS_t, \pol(\rS_t))$, and $\rX_{t}= \rS_{t+1} \sim \rtY_{t+1}$ with $\reY_1 = \einit$ and $\rtY_1= \tinit$.
\end{example}

\paragraph{Alignment monitoring.}
Without any assumption on the environment $\tenv$, little can be said about the overall alignment of $\eenv$, by looking only at the observed history. We can, however, say something about the alignment thus far. 
Here we follow the intuition: if the model has shown a consistent track record of predicting the future, there is no reason to reject it;
if the model consistently failed to predict the future, we should reject it. Formally, we measure the alignment of the model at every point in time using an alignment score function $\dist\colon (\sdistr(\aX) \times \sdistr(\aX))^*\to \RN$, which computes an alignment score between two sequences of distributions. Our objective is to construct a monitor $\moni\colon (\sdistr(\aX)\times \aX)^* \to \inter(\RN)$ computing an interval $[\clb_t, \cub_t]=\moni(\cW_1,\dots, \cW_t)$ from the observed history $\cW_1,\dots, \cW_t$ at time $t\in \pNN$ estimating the quantity $\dist(\cV_1,\dots,\cV_t)$ evaluated over the hidden history $\cV_1,\dots,\cV_t$ with high probability.
\begin{problem}
    \label{prob:fore}
    Given an unknown environment $\tenv\colon \aX^*\to \sdistr(\aX)$, a model $\eenv\colon \aX^*\to \sdistr(\aX)$, an alignment score function $\dist\colon (\sdistr(\aX) \times \sdistr(\aX))^*\to \RN$, and an error probability $\conf\in (0,1)$, find a monitor $\moni\colon (\sdistr(\aX)\times \aX)^* \to \inter(\RN)$ such that 
    \begin{align}
        \label{eq:sound}
        \prob_{\tenv}\left( \forall t\in \pNN\colon \dist(\rV_1, \dots, \rV_t) \in \moni(\rW_1,\dots,  \rW_t)\right) \geq 1-\conf.
    \end{align}
\end{problem}
The condition $\forall t\in \pNN\colon \dist(\rV_1, \dots, \rV_t) \in \moni(\rW_1, \dots, \rW_t)$ is an invariant guaranteeing that the monitor bounds the alignment score at every iteration during its infinite run. Equation~\ref{eq:sound} requires this invariant to hold with high-probability.


\section{Average Alignment Monitor}
\label{sec:alignment}
In this section we present a monitor solving Problem~\ref{prob:fore} for the average alignment score, i.e., the average over individual alignment scores computed using scoring rules. The monitor is based on tools from the sequential forecasting literature~\cite{choe2021comparing}.

\paragraph{Scoring rule.}
A bounded scoring rule is a function $\score\colon \sdistr(\aX) \times \aX \to [a,b]$ for $a,b\in \RN$ assessing the quality of the model's prediction at every time step w.r.t.\ the observation. The example below highlights two common scoring rules.
\begin{example}
    \label{ex:score_foo}
    For a distribution $\cY\in \sdistr(\aX)$, and an observation $\cX\in \aX$, the Brier score $\score_{B}$, bounded on $[0,2]$, and the spherical score $\score_{S}$, bounded on $[-1,0]$, are 
\begin{align*}
   \score_{B}(\cY, \cX) \coloneqq  \sum_{\cX'\in \aX} (\cY(\cX') - \indi{\cX'=\cX})^2 \quad \text{and} \quad  \score_{S}(\cY, \cX)\coloneqq \frac{-\cY(\cX)}{ \sqrt{\sum_{\cX' \in \aX}\cY(\cX')^2 }}.
\end{align*}
\end{example}
To assess the prediction quality w.r.t.\ the environment, we need to compute the expected score $\expe_{\rX_t\sim \ctY_t}(\score(\ceY_t, \rX_t))$. To ensure fair scoring, it is important that the scoring rule is proper, i.e., the expected score is minimised when the predicted distribution matches the true distribution, i.e., if all $\ctY, \ceY \in \sdistr(\aX)$,
\begin{align*}
    \expe_{\rX \sim \ctY}[\score(\ctY, \rX)] \leq \expe_{\rX \sim \ctY}[\score(\ceY, \rX)].
\end{align*}
The expected score assesses the alignment of the model at every $t\in \pNN$.
The average expected score (AES) extends this to sequences defined; it is defined over the hidden history $\cV\coloneqq \ceY_{1}, \ctY_1, \dots, \ceY_t, \ctY_t$ as
\begin{align*}
    \escore^{\score}(\cV) \coloneqq \frac{1}{t}\sum_{i=1}^t \expe_{\rX_i\sim \ctY_i}(\score(\ceY_i, \rX_i)).
\end{align*}

\subsection{Monitor Construction}
Computing the AES requires knowledge of the successor distributions as given by the environment. Because this is hidden from us, our monitor must estimate the AES during runtime. 


\paragraph{Point estimation.}
A natural estimator for the AES is the average score defined for every observed history $\cW\coloneqq \ceY_{1}, \cX_1, \dots, \ceY_t, \cX_t$ as
\begin{align*}
    \hat{\escore}^{\score} (\cW) \coloneqq \frac{1}{t}\sum_{i=1}^t \score(\ceY_i, \cX_i).
\end{align*}

\paragraph{Confidence sequences.}
We use confidence sequences to quantify how close our average score $\hat{\escore}_t\coloneqq \hat{\escore}^{\score}(\rW_1, \dots, \rW_t) $ is to the AES $\escore_t\coloneqq \escore^{\score}(\rV_1, \dots, \rV_t) $ for every $t\in \pNN$.
A confidence sequence for the sequence of AESs $(\escore_t )_{t\in \pNN}$, is a sequence of lower and upper bounds $(\lb_t , \ub_t )_{t\in \pNN}$ containing $(\escore_t )_{t\in \pNN}$ with high probability, i.e., for $\conf\in (0,1)$ the confidence sequence $(\lb_t , \ub_t )_{t\in \pNN}$ ensures
\begin{align*}
    \prob(\forall t\in \pNN\colon \escore_t \in [\lb_t, \ub_t] ) \geq 1-\conf.
\end{align*}
Using techniques outlined in Howard et al.~\cite{howard2021time} and Choe et al.~\cite{choe2021comparing} we construct a confidence sequence centred around the point estimate, i.e., at every time $t\in \pNN$ the lower and upper bounds are defined as $\lb_t\coloneqq \hat{ \escore}_t-\error_t$ and $\ub_t\coloneqq \hat{\escore}_t+\error_t$ respectively. The error $\error_t$ is given by 
\begin{align}
    \label{eq:bound}
    \error_t(\evar_t,\conf, \sgn_{\score}) \coloneqq \frac{\sqrt{2.13 \cdot \evar_t \cdot g(\evar_t,\conf) + 1.76\cdot \sgn_{\score}^2 \cdot g(\evar_t,\conf)^2}+ 1.33 \cdot \sgn_{\score} \cdot g(\evar_t,\conf)}{t} 
\end{align}
where $g(n,\conf)=\left(2 \cdot \log\left(\pi\log(n) / \sqrt{6}\right) + \log(2/\conf)\right) $, $\sgn_{\score}\coloneqq b-a$ is the difference between the maximal and minimal value of the scoring rule, and $\evar_t$ is the maximum between $1$ and the empirical variance process, i.e., 
\begin{align}
    \evar_t \coloneqq \max\left(1,  \sum_{i=1}^t (\score(\reY_i,\rX_i ) - \hat{\escore}_{i-1})^2\right).
\end{align}

\paragraph{Implementation.}
The alignment monitor implemented in Algorithm~\ref{alg:align} requires constant space and time w.r.t. the observed history. 
\emph{Space:} the monitor uses three counters to incrementally compute the time $t$, the empirical variance process $\evar$, and the average score $\hat{E}$. 
\emph{Time:} the monitor requires constant time to update the three counters. The only computationally demanding operation is computing the score stored in variable $s$. 
This depends on the scoring rule. For example, in the case of the Brier score and the spherical score this is in the order of $\mathcal{O}(|\aX|)$. 
We denote $T_{\score, \aX}$ as the time required to evaluate the scoring rule. 

\begin{theorem}
    \label{theorem:general_alignment}
    Let $\rV$ and $\rW$ be the hidden and the observed process defined by the environment $\tenv$ and the model $\eenv$. Let $\score$ be a scoring rule bounded on the interval $[a,b]\subset\RN$, and let $\conf\in (0,1)$ be an error probability threshold, then the monitor $\moni_{\score, \conf}$ solves Problem~\ref{prob:fore} for $\dist\coloneqq \escore^{\score}$.
The monitor requires at each iteration $\aO(1)$-space and $\aO(T_{\score, \aX})$-time w.r.t.\ the history and the state space, where $T_{\score, \aX}$ is the time required to evaluate the scoring rule. 
\end{theorem}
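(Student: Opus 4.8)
The plan is to reduce the soundness requirement (Equation~\ref{eq:sound}) to the validity of a time-uniform empirical-Bernstein confidence sequence for a bounded martingale difference sequence, and then to invoke the construction of Choe et al.~\cite{choe2021comparing} off the shelf. First I would fix the natural filtration $\filt_t \coloneqq \sigma(\rX_1, \dots, \rX_t)$ and record that, by construction, $\reY_i = \eenv(\rX_1, \dots, \rX_{i-1})$ and $\rtY_i = \tenv(\rX_1, \dots, \rX_{i-1})$ are both $\filt_{i-1}$-measurable (predictable), whereas $\rX_i \sim \rtY_i$ is drawn only afterwards. This is exactly the ``the true distribution is determined before the prediction is made'' assumption required by the cited tools. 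I then introduce the centred increments
\begin{align*}
    Z_i \coloneqq \score(\reY_i, \rX_i) - \expe_{\rX_i \sim \rtY_i}\!\left[\score(\reY_i, \rX_i)\right],
\end{align*}
and note that their running average is precisely the gap between the point estimate and the target, i.e. $\hat{\escore}_t - \escore_t = \tfrac{1}{t}\sum_{i=1}^t Z_i$.

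Next I would verify the two structural properties that make the cited bound applicable. Since $\reY_i$ and $\rtY_i$ are $\filt_{i-1}$-measurable and $\rX_i \sim \rtY_i$, we have $\expe[\score(\reY_i, \rX_i) \mid \filt_{i-1}] = \expe_{\rX_i \sim \rtY_i}[\score(\reY_i, \rX_i)]$, which is exactly the $i$-th summand of $\escore_t$; hence $\expe[Z_i \mid \filt_{i-1}] = 0$ and $(Z_i)_{i \in \pNN}$ is a martingale difference sequence adapted to $(\filt_i)$. Boundedness is immediate: as $\score$ takes values in $[a,b]$, both terms of $Z_i$ lie in $[a,b]$, so $|Z_i| \leq b-a = \sgn_{\score}$. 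With these two facts in hand, the error $\error_t$ of Equation~\ref{eq:bound}, with its empirical variance process $\evar_t$, stitched boundary $g(n,\conf)$, and numerical constants $2.13$, $1.76$, $1.33$, is precisely the variance-adaptive, time-uniform empirical-Bernstein confidence sequence of Howard et al.~\cite{howard2021time} as instantiated by Choe et al.~\cite{choe2021comparing} (the $\log\log$ term in $g$ supplies time-uniformity via a union bound over geometrically spaced epochs). Applying it to $(Z_i)$ yields, with probability at least $1-\conf$, the simultaneous bound $|\hat{\escore}_t - \escore_t| \leq \error_t$ for all $t \in \pNN$, equivalently $\escore_t \in [\hat{\escore}_t - \error_t, \hat{\escore}_t + \error_t] = [\lb_t, \ub_t]$. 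Since $\moni_{\score, \conf}(\rW_1, \dots, \rW_t)$ returns exactly this interval and $\dist = \escore^{\score}$, this is Equation~\ref{eq:sound}, so the monitor solves Problem~\ref{prob:fore}.

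The complexity claim follows by inspecting the incremental updates in Algorithm~\ref{alg:align}: the counters tracking $t$, the average score $\hat{\escore}$, and the empirical variance $\evar$ can each be maintained in $\aO(1)$ space (using the previous average $\hat{\escore}_{t-1}$ in the variance recursion) and updated in $\aO(1)$ time, so the only non-constant work per step is the single evaluation of $\score(\reY_t, \rX_t)$, which costs $T_{\score, \aX}$ (for instance $\aO(|\aX|)$ for the Brier and spherical scores of Example~\ref{ex:score_foo}).

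I expect the only genuine content to live in the reduction of the first two paragraphs, namely establishing that $\reY_i, \rtY_i$ are predictable and hence that $(Z_i)$ is a bounded martingale difference sequence; once that is in place the concentration result is used verbatim. The main obstacle is therefore purely one of hypothesis-matching: confirming that the forecast distributions are predictable with respect to $(\filt_i)$ and that the increments satisfy $|Z_i| \leq \sgn_{\score}$, so that the confidence sequence of Choe et al.\ applies with exactly the stated constants. The heavy analytic work, i.e. deriving the boundary $g$ and its constants, is already discharged in the cited literature and need not be reproduced.
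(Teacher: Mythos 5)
Your proposal is correct: the predictability of $\reY_i$ and $\rtY_i$ with respect to the observation filtration, the reduction of $\hat{\escore}_t - \escore_t$ to an average of bounded, centred increments, and the invocation of the stitched empirical-Bernstein confidence sequence of Howard et al.\ as instantiated by Choe et al.\ is exactly the machinery the paper relies on, and your complexity argument matches the paper's. The organization differs, however: the paper never proves Theorem~\ref{theorem:general_alignment} directly. It instead proves a single lemma (Lemma~\ref{lemma:bound}) in the more general \emph{weighted} setting, where scores are multiplied by prediction weights $\afoo$ and outcome weights $\bfoo$ and time is replaced by the weighted time $t_{\afoo}$; this yields Theorem~\ref{theorem:weighted_alignment}, and Theorem~\ref{theorem:general_alignment} is then read off as the special case where $\afoo$ is constant $1$, the scoring rule is weight-independent, and $c_{\bfoo}=1$, with the $\aO(1)$-space claim following because no history needs to be stored in that case. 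Your direct argument is essentially the unweighted instantiation of that lemma: what you phrase as a bounded martingale difference sequence $(Z_i)$, the paper packages as the test supermartingale $\exp\bigl((\hat{M}_t - M_t) - \psi_{\sgn_{\score}}(\lambda)\evar_t'\bigr)$ via Proposition~1 of Choe et al., followed by the same stitching bound and the same division by (weighted) time. The trade-off is the usual one: your route is more self-contained and elementary for this particular theorem, while the paper's factoring lets one concentration argument serve both the average and the weighted monitor, at the cost of making the simplest theorem depend on the most general one.
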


\begin{algorithm}
\caption{Average Alignment Monitor $\moni_{\score, \conf}$}
\label{alg:align}
\begin{algorithmic}[1]
\Require Error probability $\conf \in (0,1)$, scoring rule $\score\colon \sdistr(\aX)\times \aX \to [a,b]$.
\Function{Init}{}
    \State $t \gets 0$;  \;$\hat{\escore} \gets 0$;  \; $\evar \gets 1; \;  \sgn_{\score}\gets b-a $
\EndFunction

\Function{Next}{$\ceY, \cX$}
    \State $s \gets \score(\ceY, \cX)$;  \;$t \gets t + 1$;  \; $\evar \gets \max(1, \evar + (s - \hat{\escore})^2)$
    \State $\hat{\escore} \gets \frac{1}{t}\cdot (t - 1) \cdot \hat{\escore} + s$; \;  $g \gets 2 \log\left( \frac{\pi \log(\evar)}{\sqrt{6}} \right) + \log\left( \frac{2}{\conf} \right)$
    \State $\error \gets \frac{ 1 }{t}\left( \sqrt{2.13 \cdot \evar \cdot g + 1.76 \cdot \sgn_{\score}^2 \cdot g^2} + 1.33 \cdot \sgn_{\score} \cdot g\right)$
    \State \Return $[\hat{\escore} - \error, \hat{\escore} + \error]$
\EndFunction
\end{algorithmic}
\end{algorithm}

\section{Extensions}
In Section~\ref{sec:alignment} we presented a monitor for the average expected score (AES), i.e., the average of all past scores computed by a scoring function of a single model. 
In this section we extend our average alignment monitor by
the differential alignment monitor, which compares the alignment scores of two models, and 
the weighted alignment monitor, which computes the weighted average of weighted scores.


\subsection{Differential Alignment Monitor}
Monitoring a single quantitative value may not be overly informative without a reference point. This is where differential alignment monitoring comes in.

\paragraph{Differential alignment monitoring.}
Assume that in addition to the model $\eenv$, we are given a reference model $\eenv^{\rf}$. The reference model represents a benchmark against which we want to assess the performance of $\eenv$. 
\begin{example}[Ex~\ref{ex:mdp2} cont.]
\label{ex:mdp3}
    The reference model $\emdp^{\rf}\coloneqq (\aS,\aA, \etrm^{\rf}, \einit^{\rf})$ for the environment $\tmdp$ differs depending on the available knowledge. 
    If little is known, a worst-case reference model is the uniform distribution, i.e., the model should be at least better than random chance.
    We distinguish between a black- and gray-box setting.
    In the black-box setting, $\emdp^{\rf}$ assigns \emph{each state} the same probability, i.e., for all $\cS,\cS'\in \aS$ and $\cA\in \aA$ we have $\etrm^{\rf}(\cS,\cA, \cS')=1/|\aS|$.
    In the gray-box setting, $\emdp^{\rf}$ assigns \emph{each successor} the same probability, i.e., for all $\cS, \cS'\in \aS$ and $\cA\in \aA$ we have $\etrm^{\rf}(\cS,\cA,\cS')=1/|\aS_{\cS,\cA}|$ if $\cS'\in \aS_{\cS, \cA}$ where $\aS_{\cS,\cA}\coloneqq \{\cS'\in \aS\mid \ttrm(\cS, \cA,\cS')>0\}$, else $\etrm^{\rf}(\cS,\cA,\cS')=0$.
\end{example}
In the differential alignment monitoring setting, the monitor observes the predictions of both the model and the reference model at the same time, i.e., the hidden process 
$\rV^{\rf}\coloneqq (\rV_t^{\rf})_{t\in \pNN}= (\reY_t,\reY_t ^{\rf}, \rtY_t)_{t\in \pNN}$ and the observed process $\rW^{\rf}\coloneqq (\rW_t^{\rf})_{t\in \pNN}= (\reY_t,\reY _t^{\rf}, \rX_t)_{t\in \pNN}$ are defined analogously to $\rV$ and $\rW$.
\begin{problem}
    \label{prob:diff_fore}
    Given an unknown environment $\tenv\colon \aX^*\to \sdistr(\aX)$, a model $\eenv\colon \aX^*\to \sdistr(\aX)$, a reference model $\eenv^{\rf}\colon \aX^*\to \sdistr(\aX)$, an alignment score function $\dist\colon (\sdistr(\aX) \times \sdistr(\aX))^*\to \RN$, and an error probability $\conf\in (0,1)$, find a monitor $\moni\colon (\sdistr(\aX)\times \sdistr(\aX)\times \aX)^* \to \inter(\RN)$ such that 
    \begin{align}
        \label{eq:diff_sound}
        \prob_{\tenv}\left( \forall t\in \pNN\colon  D(\rV_1^{\rf}, \dots, \rV_t^{\rf}) \in \moni(\rW_1^{\rf}, \dots ,\rW_t^{\rf}) \right) \geq 1-\conf
    \end{align}
    where $D(\rV_1^{\rf}, \dots, \rV_t^{\rf}) = \dist(\reY_1, \rtY_1, \dots, \reY_t, \rtY_t)- \dist(\reY_1^{\rf}, \rtY_1, \dots, \reY_t^{\rf}, \rtY_t)$.
\end{problem}

\paragraph{Differential alignment monitor.}
We modify Algorithm~\ref{alg:align} to solve Problem~\ref{prob:diff_fore}. 
The modification, presented in Algorithm~\ref{alg:diff_align}, is limited to computing the score difference and adjusting the score value bounds to $[a-b, b-a]$.
\begin{theorem}
    \label{theorem:diff_alignment}
    Let $\rV^{\rf}$ and $\rW^{\rf}$ be the hidden and the observed process defined by the environment $\tenv$, the model $\eenv$, and the reference model $\eenv^{\rf}$. Let $\score$ be a scoring rule bounded on the interval $[a,b]\subset\RN$, and let $\conf\in (0,1)$ be an error probability, then the monitor $\moni_{\score, \conf}^D$ solves Problem~\ref{prob:diff_fore} for $\dist\coloneqq \escore^{\score}$.
 The monitor requires at each iteration $\aO(1)$-space and $\aO(T_{\score, \aX})$-time w.r.t.\ the history and the state space, where $T_{\score, \aX}$ is the time required to evaluate the score function. 
\end{theorem}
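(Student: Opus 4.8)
The plan is to reduce Theorem~\ref{theorem:diff_alignment} to the already-established Theorem~\ref{theorem:general_alignment} by observing that the differential quantity $D$ is itself an average expected score, only for a composite scoring rule. Concretely, I would define $\score^D(\ceY, \ceY^{\rf}, \rX) \coloneqq \score(\ceY, \rX) - \score(\ceY^{\rf}, \rX)$, treating the pair $(\ceY, \ceY^{\rf})$ as a single prediction. The first step is to record boundedness: since $\score$ ranges over $[a,b]$, the difference $\score^D$ ranges over $[a-b, b-a]$, giving range width $2(b-a)$, which is precisely the adjusted bound that Algorithm~\ref{alg:diff_align} plugs into Eq.~\ref{eq:bound}.

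Second, I would use linearity of conditional expectation to show that the target of Problem~\ref{prob:diff_fore} is exactly the AES of $\score^D$ over the hidden history. Because the environment distribution $\rtY_i$ is shared between the two terms of the difference, for each $i$ we have
\begin{align*}
\expe_{\rX_i \sim \rtY_i}[\score^D(\reY_i, \reY_i^{\rf}, \rX_i)] = \expe_{\rX_i \sim \rtY_i}[\score(\reY_i, \rX_i)] - \expe_{\rX_i \sim \rtY_i}[\score(\reY_i^{\rf}, \rX_i)],
\end{align*}
so averaging over $i = 1,\dots,t$ reproduces $D(\rV_1^{\rf},\dots,\rV_t^{\rf})$, recalling the definition of $\escore^{\score}$. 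The matching point estimate $\tfrac{1}{t}\sum_{i=1}^t \score^D(\reY_i, \reY_i^{\rf}, \rX_i)$ is the difference of the two individual average scores, which is exactly the value Algorithm~\ref{alg:diff_align} accumulates in its running counter.

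Third, I would verify that the confidence-sequence construction behind Theorem~\ref{theorem:general_alignment} (from Howard et al.~\cite{howard2021time} and Choe et al.~\cite{choe2021comparing}) still applies to $\score^D$. That construction needs only that the scored values be uniformly bounded and that each prediction be fixed before its observation is drawn. Boundedness is already established; for predictability, both $\reY_i = \eenv(\rX_1,\dots,\rX_{i-1})$ and $\reY_i^{\rf} = \eenv^{\rf}(\rX_1,\dots,\rX_{i-1})$ are functions of the past observations while $\rX_i$ is sampled from the environment distribution fixed at the start of step $i$. Hence $(\score^D(\reY_i, \reY_i^{\rf}, \rX_i))_i$ is a bounded sequence whose conditional mean given the history is the $i$-th summand of $D$, which is the exact structure the confidence sequence requires. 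I expect the main obstacle to be making this reduction airtight: one must confirm that the confidence sequence treats the prediction as an opaque predictable object, depending on it only through the resulting bounded scalar score, so that substituting a pair-valued prediction and a \emph{non-proper} composite rule changes nothing in the validity argument. This is a non-issue in substance, since properness only governs the interpretation of the optimum and never enters the validity proof, which depends on boundedness alone; substituting $\score^D$ and $\sgn_{\score^D} = 2(b-a)$ into Eq.~\ref{eq:bound} then yields the interval guarantee of Eq.~\ref{eq:diff_sound}.

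Finally, for the complexity claim I would note that running the machinery of Algorithm~\ref{alg:align} on $\score^D$ costs two evaluations of $\score$ per step, so the time remains $\aO(T_{\score, \aX})$, while the same three scalar counters give $\aO(1)$ space w.r.t.\ the history and the state space.
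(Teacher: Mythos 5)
Your proof is correct, but it takes a different route than the paper. The paper's proof is essentially a one-line appeal to Theorem~2 of Choe et al.~\cite{choe2021comparing}, which is precisely the confidence-sequence result for the average score \emph{differential} between two forecasters; the only additions are the adjustment of the almost-sure bound to $2(b-a)$ and the remark that the complexity claim follows from Theorem~\ref{theorem:general_alignment} with two score evaluations per step. You instead reduce the differential problem to the paper's own single-model result by packaging $\score(\ceY,\cX)-\score(\ceY^{\rf},\cX)$ as a composite scoring rule with a pair-valued prediction, and then re-verify the only two hypotheses the confidence-sequence machinery actually uses: boundedness on $[a-b,b-a]$ and predictability of the predictions, correctly observing that properness never enters the validity argument. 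This is sound, since the underlying supermartingale construction (Lemma~\ref{lemma:bound}, via Proposition~1 of Choe et al.) depends on the prediction only through the bounded, predictable scalar score whose conditional mean is the corresponding summand of $D$. One technical remark: because Theorem~\ref{theorem:general_alignment} as stated fixes the prediction domain to $\sdistr(\aX)$, your reduction is really to its \emph{proof} rather than to its statement verbatim --- a point you flag and resolve appropriately. As for what each approach buys: the paper's citation is shorter and leans on an external theorem tailored exactly to forecaster comparison, while yours is more self-contained, makes explicit why the range doubles and why properness is irrelevant, and would extend immediately to, e.g., differences of weighted scores.
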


\begin{algorithm}
\caption{Differential Alignment Monitor $\moni_{\score, \conf}^D$}
\label{alg:diff_align}
\begin{algorithmic}[1]
\Require Error probability $\conf \in (0,1)$, scoring rule $\score\colon \sdistr(\aX)\times \aX \to [a,b]$. \Comment{$\sgn_{\loss}$ is scaled by $2$.}
\Function{Init}{}
    \State $t \gets 0$;  \;$\hat{\escore} \gets 0$;  \; $\evar \gets 1; \;  \sgn_{\score}\gets 2\cdot (b-a )$
    \Comment{$\sgn_{\loss}$ is scaled by $2$.}
\EndFunction
\Function{Next}{$\ceY, \ceY^{\rf}, \cX$}
    \State $s \gets \score(\ceY, \cX)- \score(\ceY^{\rf}, \cX)$ ;  \;$t \gets t + 1$ \Comment{$s$ is the score difference.}
    \State $\evar \gets \max(1, \evar + (s - \hat{\escore})^2)$ ;\; $\hat{\escore} \gets \frac{1}{t}\cdot (t - 1) \cdot \hat{\escore} + s$
    \State \Return $[\hat{\escore} - \error_t(\evar, \conf, \sgn_{\loss}), \hat{\escore} + \error_t(\evar, \conf, \sgn_{\loss})]$
\EndFunction
\end{algorithmic}
\end{algorithm}

\subsection{Weighted Alignment Monitor}
The AES treats all predictions and all observations equally. In formal verification, this is not necessarily true. Some predictions may be high-stake. Some wrongly predicted outcomes are worse. Our monitor should be able to account for that. 

\begin{example}
\label{ex:fair}
The Markov chain below encodes a classical bank loan example from the fairness literature~\cite{henzinger2023monitoring}. From the initial state ($S$) either a person from group $A$ or group $B$ is applying for a loan. If the loan is granted ($G$), the person can either repay the loan ($R$) or default on it ($D$). In all other cases we return back to the initial state $S$.
For a common fairness measure, such as the difference of the loan grant probabilities between groups~\cite{henzinger2023monitoring}, we can evaluate the fairness from the model directly, i.e., $\prob(G\mid A)- \prob(G\mid B) = 0.7 - 0.4$. Naturally, we care more about the model's alignment on states $A$ and $B$.
\begin{center}
\scalebox{0.7}{\begin{tikzpicture}[->, >=stealth, auto, node distance=2.5cm, scale=0.8, semithick]

  \tikzstyle{every state}=[draw=black, thick, minimum size=6mm]
  \tikzstyle{dashed state}=[state, draw=black, dashed]

  \node[dashed state] (s0) {$S$};
  \node[state] (s1) [right of=s0, yshift=1cm] {$A$};
  \node[state] (s2) [right of=s0, yshift=-1cm] {$B$};
  \node[state] (s3) [right of=s1] {$G$};
  \node[state] (s5) [right of=s3, yshift=0.5cm] {$R$};
  \node[state] (s6) [right of=s3, yshift=-0.5cm] {$D$};
  \node[state] (s7) [right of=s2] {$G$};
  \node[state] (s9) [right of=s7, yshift=0.5cm] {$R$};
  \node[state] (s10) [right of=s7, yshift=-0.5cm] {$D$};
  \node[dashed state] (s11) [right of=s0, xshift=8cm] {$S$};

  \path (s0) edge node[above] {0.8} (s1)
            edge node[below] {0.2} (s2)
        (s1) edge node[above] {0.7} (s3)
             edge[bend right=40] node[above] {0.3} (s0)
        (s3) edge node[above] {0.3} (s5)
             edge node[below] {0.7} (s6)
        (s2) edge node[above] {0.4} (s7)
             edge[bend left=40] node[below] {0.6} (s0)
        (s7) edge node[above] {0.9} (s9)
             edge node[below] {0.1} (s10)
        (s5) edge (s11)
        (s6) edge (s11)
        (s9) edge (s11)
        (s10) edge (s11);

\end{tikzpicture}}
\end{center}

\end{example}
Quantifying the importance of states is a known concept~\cite{pranger2024test,pouget2021ranking}. 
However, as shown in the example below, it is necessary to consider transitions as well. 
\begin{example}
\label{ex:safe}
    Consider the Markov chain depicted below. The solid lines indicate the model; the environment is the union of the dashed and solid lines.
    Hence, during monitoring we will eventually observe two transitions not in the support of the model. Those two transitions are not equal.     
    Take $s_4$ to be an unsafe state. 
    In our model the probability of being safe when starting in $s_1$ is $0.9$, while in reality the probability is $0$. 
    During monitoring we may observe the system transitioning from $s_6$ to $ s_5$. Although this transition is not in the support of our model, the value of our verdict is not jeopardised. By contrast, if we observe the transition from $s_6$ to $s_2$ we should be alarmed, as it is vital for the validity of our verdict. Naturally, we care more about the latter than the former transition.
    \begin{center}
\scalebox{0.7}{\begin{tikzpicture}[->, >=stealth, auto, node distance=2.5cm, scale=0.8, semithick]

  \tikzstyle{every state}=[draw=black, thick, minimum size=6mm]

  \node[state] (q1) {$s_1$};
  \node[state] (q2) [right of=q1] {$s_2$};
  \node[state] (q3) [above right of=q2, xshift=1cm, yshift=-1cm] {$s_3$};
  \node[state] (q4) [below right of=q2, xshift=1cm, yshift=1cm] {$s_4$};
\node[state] (q5) [right of=q3] {$s_5$};
\node[state] (q6) [right of=q5] {$s_6$};

  \path 
        (q1) edge (q2)
        (q3) edge (q5)
        (q5) edge (q6)
        (q2) edge node[above left] {$0.9$} (q3)
        (q2) edge node[below left] {$0.1$} (q4)
        (q4) edge [bend left=20] (q1);
    \draw[bend right=35] (q6) to (q3);
    \draw[dashed, bend right=30] (q6) to (q5);
    \draw[dashed, bend left=10] (q6) to (q2);
\end{tikzpicture}}
\end{center}

\end{example}

\paragraph{Weighted alignment monitoring.}
Assume the alignment score depends not only on the distributions, but also on the past observations, i.e., $\dist\colon (\sdistr(\aX) \times \sdistr(\aX) \times \aX)^*\to \RN$. 
Our objective is to construct a monitor $\moni\colon (\sdistr(\aX)\times \aX)^* \to \inter(\RN)$ computing an interval $[\clb_t, \cub_t]=\moni(\cW_1,\dots, \cW_t)$ from the observed history $\cW_1,\dots, \cW_t$ at time $t\in \pNN$ estimating, with high probability, the quantity $\dist(\cV_1, \cX_1,\dots,\cV_t, \cX_t)$ evaluated over the hidden history $\cV_1, \dots, \cV_t$ and the observed states $\cX_1, \dots, \cX_t$.
\begin{problem}
    \label{prob:weighted_fore}
    Given an unknown environment $\tenv\colon \aX^*\to \sdistr(\aX)$, a model $\eenv\colon \aX^*\to \sdistr(\aX)$, a weighted alignment score function $\dist\colon (\sdistr(\aX) \times \sdistr(\aX) \times \aX)^*\to \RN$, and an error probability $\conf\in (0,1)$, find a monitor $\moni\colon (\sdistr(\aX)\times \aX)^* \to \inter(\RN)$ s.t.\
    \begin{align}
        \label{eq:weight_sound}
        \prob_{\tenv}\left( \forall t\in \pNN\colon \dist(\rV_1, \rX_1,\dots, \rV_t, \rX_t) \in \moni(\rW_1,\dots,  \rW_t)\right) \geq 1-\conf.
    \end{align}
\end{problem}

\paragraph{Weighted scoring rules.}
Analogously to the average alignment monitor, the weighted alignment monitor uses weighted scoring rules, which are scoring rules $\score_{\omega}\colon \sdistr(\aX)\times \aX \to [c_{\omega}\cdot a, c_{\omega} \cdot b]$ defined by a weight function $\omega\colon \aX\to [0, c_{\omega}]$ for $ c_{\omega}>0$.
There are multiple approaches for creating weighted scoring rules~\cite{allen2024weighted,holzmann2016weighted,allen2023evaluating}.
The example below shows the outcome-based method of Holzmann et al.~\cite{holzmann2016weighted}.
\begin{example}
\label{ex:weighted_rules}
    Given a proper scoring rule $\score\colon \sdistr(\aX)\times \aX \to \RN$ and a weight function $\omega\colon \aX \to [0,1]$, we obtain a scoring rule $\score_{\omega}$ proper on $\{\cX \in \aX\mid \omega(\cX)>0\}$ by defining for $\cY\in \sdistr(\aX)$ and $\cX\in \aX$ 
    \begin{align*}
        \score_{\omega}(\cY, \cX) \coloneqq \omega(\cX) \score(\cY_{\omega}, \cX) \quad \text{where}\quad  \cY_{\omega}(\cX')\coloneqq \frac{\omega(\cX')\cY(\cX')}{\sum_{\cX''\in \aX}\omega(\cX'')\cY(\cX'') } \quad \forall \cX''\in \aX.
    \end{align*}
\end{example}

\paragraph{Weighted alignment score.}
The weighted alignment score is the weighted average over scores computed by weighted scoring rules. 
For $c_{\afoo}, c_{\bfoo}>0$ let $\afoo\colon \aX^* \to [0,c_{\afoo}]$ be a function that assigns each prediction a weight based on the history, and let $\bfoo\colon \aX^*\to (\aX \to [0,c_{\bfoo}])$ be a function that defines a weight function for each outcome based on the history. 
Let $\cW\coloneqq \ceY_1, \cX_1, \dots, \ceY_t, \cX_t$ be a observed history, $\cZ\coloneqq \cX_1, \dots, \cX_t$ the corresponding sequence of states, and $\cZ_{1:k} \coloneqq \cX_1, \dots, \cX_{k}$ the prefix of $\cZ$ of length $k\in [t]$. We define the weighted alignment score as
\begin{align*}
\hat{F}_{\afoo,\bfoo}^{\score}(\cW_1, \dots, \cW_t)\coloneqq \frac{1}{ t_{\afoo}(\cZ)} \sum_{i=1}^t \afoo(\cZ_{1:i-1})\cdot\score_{\bfoo(\cZ_{1:i-1})}(\ceY_i, \cX_i)
\end{align*}
where $\score_{\bfoo(\cZ_{1:i-1})}$ is a weighted scoring function and $t_{\afoo}(\cZ)\coloneqq \sum_{i=1}^t \afoo(\cZ_{1:i-1})$ is the weighted ``progression of time''. 
This is an estimator of the weighted expected score. It is defined for the corresponding hidden history $\cV\coloneqq \ceY_1, \ctY_1, \dots, \ceY_t, \ctY_t$ as
\begin{align*}
\wscore_{\afoo,\bfoo}^{\score}(\cV_1,\cX_1, \dots, \cV_t,\cX_t )\coloneqq \frac{1}{ t_{\afoo}(\cZ)} \sum_{i=1}^t \afoo(\cZ_{1:i-1})\cdot \expe_{\rX_i\sim \ctY_i}(\score_{\bfoo(\cZ_{1:i-1})}(\ceY_i, \rX_i)).
\end{align*}

\begin{example}[Ex.~\ref{ex:safe} and ~\ref{ex:fair}]
    \label{ex:weighted}
    We present weight functions for the Markov chain in Example~\ref{ex:safe} and ~\ref{ex:fair}. We define them as a function of the current state and current transition, i.e., $\afoo \colon \aS\to [0,1]$ and $\bfoo\colon \aS\to (\aS\to [0,1])$.
    In Example~\ref{ex:fair}, we limit our alignment monitor to the states $A$ and $B$, i.e., $\afoo(A)=\afoo(B)=1$ and $0$ otherwise.
    In Example~\ref{ex:safe}, we focus on the states in the bottom strongly connected component (BSCC), i.e., for all 
$s\in C \coloneqq \{s_3, s_5, s_6\}$ we have $\afoo(s)=1$ and $0.1$ otherwise. Moreover, we penalise transitions away from the BSCC, i.e., $\bfoo(s)(s')=1$ if $s\in C$ and $s'\not\in C$, and $0.05$ otherwise. 
\end{example}

\paragraph{Weighted alignment monitor.}
We modify Algorithm~\ref{alg:align} to solve Problem~\ref{prob:weighted_fore} for $\dist=\wscore_{\afoo,\bfoo}^{\score}$. We modify the almost sure bound to be $\sgn_{\score}^{\afoo, \bfoo}\coloneqq c_{\afoo} \cdot c_{\bfoo}  \cdot (b-a)$, because multiplying the score changes its scale. We keep track of the weighted time $t_{\afoo}$ instead of the actual time, and have to remember the entire history because of the weight functions.
Instead of the normal score, we compute the score weighted by both $\afoo$ and $\bfoo$. 
Details are in Algorithm~\ref{alg:weight_align}.
\begin{theorem}
    \label{theorem:weighted_alignment}
    Let $\rV$ and $\rW$ be the hidden and the observed process defined by the environment $\tenv$ and the model $\eenv$. Let $\score$ be a scoring rule bounded on the interval $[a,b]\subset\RN$, let $\afoo\colon \aX^* \to [0,c_{\afoo}]$ and $\bfoo\colon \aX^*\to (\aX \to [0,c_{\bfoo}])$ be a prediction and an observation weight function, and $\conf\in (0,1)$ be an error probability, then the monitor $\moni_{\score, \conf}^W$ solves Problem~\ref{prob:weighted_fore} for $\dist\coloneqq \wscore_{\afoo,\bfoo}^{\score}$.
 The monitor requires at each iteration $\aO(t)$-space and $\aO(t + T_{\score, \aX})$-time w.r.t.\ the history $\cX_1, \dots, \cX_t$ and the state space, where $T_{\score, \aX}$ is the time required to evaluate the score function.
\end{theorem}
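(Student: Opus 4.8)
The plan is to reduce the weighted case to the martingale concentration argument that already underlies Theorem~\ref{theorem:general_alignment}, by viewing each per-step weighted score as a bounded, \emph{predictable} transform of the base scoring rule. The essential observation is that the weights $\afoo(\cZ_{1:i-1})$ and $\bfoo(\cZ_{1:i-1})$ depend only on the observed prefix $\cX_1,\dots,\cX_{i-1}$, so they are measurable with respect to the natural filtration $\filt_{i-1}\coloneqq \sigma(\rX_1,\dots,\rX_{i-1})$ and are therefore predictable. Consequently, folding them into the score does not destroy the conditional-expectation structure on which the confidence sequence relies, and the soundness argument goes through with the substitutions $t\mapsto t_{\afoo}(\cZ)$ and $\sgn_{\score}\mapsto \sgn_{\score}^{\afoo,\bfoo}$ that Algorithm~\ref{alg:weight_align} hard-codes.

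First I would fix the filtration $(\filt_t)_{t\in\pNN}$ and record that, at step $i$, both $\reY_i=\eenv(\rX_1,\dots,\rX_{i-1})$ and $\rtY_i=\tenv(\rX_1,\dots,\rX_{i-1})$ are $\filt_{i-1}$-measurable, with the only fresh randomness being $\rX_i\sim\rtY_i$. Writing $S_i\coloneqq \afoo(\cZ_{1:i-1})\,\score_{\bfoo(\cZ_{1:i-1})}(\reY_i,\rX_i)$ for the weighted increment and $\mu_i\coloneqq \afoo(\cZ_{1:i-1})\,\expe_{\rX_i\sim\rtY_i}[\score_{\bfoo(\cZ_{1:i-1})}(\reY_i,\rX_i)]$ for its conditional mean, predictability of the weights yields $\expe[S_i\mid\filt_{i-1}]=\mu_i$, so $\sum_{i\le t}(S_i-\mu_i)$ is a martingale. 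Next I would establish the uniform bound: since $\bfoo(\cZ_{1:i-1})$ takes values in $[0,c_{\bfoo}]$, the weighted scoring rule maps into $[c_{\bfoo}a,c_{\bfoo}b]$, and multiplying by $\afoo(\cZ_{1:i-1})\in[0,c_{\afoo}]$ gives an interval of width at most $c_{\afoo}c_{\bfoo}(b-a)=\sgn_{\score}^{\afoo,\bfoo}$. Hence the increments $S_i-\mu_i$ are bounded with the scaled range used in Algorithm~\ref{alg:weight_align}. (Note that properness of $\score$ is needed only for the interpretation of $\wscore_{\afoo,\bfoo}^{\score}$, not for coverage, which requires boundedness and predictability alone.)

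With a bounded, predictable martingale-difference sequence in hand, I would invoke the time-uniform concentration inequality of Howard et al.~\cite{howard2021time} exactly as in Choe et al.~\cite{choe2021comparing}, now applied to $\sum_{i\le t}(S_i-\mu_i)$ with its empirical variance process $\evar_t$. This yields, with probability at least $1-\conf$ and simultaneously for all $t$, the bound $|\sum_{i\le t}(S_i-\mu_i)|\le B_t$, where $B_t$ is the numerator of Equation~\ref{eq:bound} evaluated with $\sgn_{\score}^{\afoo,\bfoo}$. Dividing by the predictable, strictly positive normalizer $t_{\afoo}(\cZ)=\sum_{i\le t}\afoo(\cZ_{1:i-1})$ turns the left side into $|\hat{\wscore}_{\afoo,\bfoo}^{\score}-\wscore_{\afoo,\bfoo}^{\score}|$ and the right side into the algorithm's $\error_t$ with $t$ replaced by $t_{\afoo}(\cZ)$, which is precisely Equation~\ref{eq:weight_sound}. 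For complexity I would note that, because $\afoo$ and $\bfoo$ are functions of the entire observed prefix, the monitor must retain $\cX_1,\dots,\cX_t$, giving $\aO(t)$ space; each update evaluates the weights on this prefix in $\aO(t)$ time and the weighted score in $\aO(T_{\score,\aX})$, while the updates of $t_{\afoo}$, $\evar$, and $\hat{\wscore}$ are $\aO(1)$, for a total of $\aO(t+T_{\score,\aX})$ per step.

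The main obstacle is justifying that the confidence-sequence machinery, stated in Choe et al.~for uniformly averaged scores, remains valid under predictable per-step weights and, in particular, under the data-dependent normalizer $t_{\afoo}(\cZ)$. The resolution rests on the fact that the underlying result of Howard et al.~is a statement about the running martingale sum and its variance process rather than about the count $t$: any predictable reweighting is absorbed into the martingale increments, and dividing by the predictable quantity $t_{\afoo}(\cZ)$ is a rescaling that is deterministic given the history and hence does not affect the coverage probability. I would take care to verify that the variance-process and range substitutions match those in Algorithm~\ref{alg:weight_align}, so that the instantiated $\error_t$ is exactly the one certified by the concentration bound.
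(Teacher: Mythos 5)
Your proposal is correct and follows essentially the same route as the paper: the paper's proof (via its Lemma on the bound) likewise forms the unnormalized weighted score sums, notes that the weights and the running average are predictable, invokes Proposition 1 of Choe et al.\ to get a test supermartingale, applies the stitching bound of Howard et al.\ with the scaled range $\sgn_{\score}^{\afoo,\bfoo}=c_{\afoo}c_{\bfoo}(b-a)$, and then divides through by the predictable normalizer $t_{\afoo}$; the complexity accounting is also identical. Your explicit verification of the martingale-difference structure and the remark that properness is irrelevant to coverage are slightly more detailed than the paper's write-up, but they do not constitute a different argument.
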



\begin{algorithm}
\caption{Weighted Alignment Monitor $\moni_{\score, \conf}^W$}
\label{alg:weight_align}
\begin{algorithmic}[1]
\Require Error probability $\conf \in (0,1)$, scoring rule $\score\colon \sdistr(\aX)\times \aX \to [a,b]$, prediction and observation weight function  $\afoo\colon \aX^* \to [0,c_{\afoo}]$ and $\bfoo\colon \aX^*\to (\aX \to [0,c_{\bfoo}])$
\Function{Init}{}
    \State $t \gets 0$;  \;$\hat{\escore} \gets 0$;  \; $\evar \gets 1$;\; $\cZ \gets \epsilon$ \Comment{Initialise Memory}
    \State $\sgn_{\score}^{\afoo, \bfoo}\gets c_{\afoo} \cdot c_{\bfoo}  \cdot (b-a)$
\Comment{$\sgn_{\loss}$ scaled by max weights.}
\EndFunction
\Function{Next}{$\ceY, \cX$}
    \State $s \gets \afoo(\cZ)\cdot  \score_{\bfoo(\cZ)}(\ceY, \cX)$; \Comment{$s$ is the score times the weights.}
    \State $t \gets t + \afoo(\cZ)$ \Comment{$t$ is the sum of prediction weights.}
    \State $\cZ \gets \cZ \cdot \cX$ \Comment{Increase memory}
    \State $\evar \gets \max(1, \evar + (s - \hat{\escore})^2)$ ;\; $\hat{\escore} \gets \frac{1}{t}\cdot (t - 1) \cdot \hat{\escore} + s$
    \State \Return $[\hat{\escore} - \error_t(\evar, \conf, \sgn_{\loss}^{\afoo, \bfoo}), \hat{\escore} + \error_t(\evar, \conf,  \sgn_{\loss}^{\afoo, \bfoo})]$
\EndFunction
\end{algorithmic}
\end{algorithm}

\begin{remark}
    In formal verification we commonly assume the environment to be Markovian. Here it is sensible to define the weights as a function of the current state and current transition, i.e., $\afoo \colon \aS\to [0,c_{\afoo}]$ and $\bfoo\colon \aS\to (\aS\to [0,c_{\bfoo}])$. In this case, the monitor requires $\aO(1)$-space and $\aO(T_{\score, \aX})$-time per iteration.
\end{remark}

\section{Experiments}
\label{sec:experiments}
All experiments were run on an Apple M2 Pro with 16GB.

\paragraph{Expected scoring rules.}
We show the difference in behaviour between the two scoring rules, the Brier $\score_{B}$ and the spherical score $\score_{S}$ (see Example~\ref{ex:score_foo}), using a discretised and truncated Gaussian distribution over $100$ values. The environment distribution has mean $50$ and standard deviation $5$. The model uses the same distribution with different parameters.
Figure~\ref{fig:expected_score} depicts the changes to the score when modifying the mean or the standard deviation of the model. 
\begin{figure}[ht]
    \centering
     \includegraphics[width=1\linewidth]{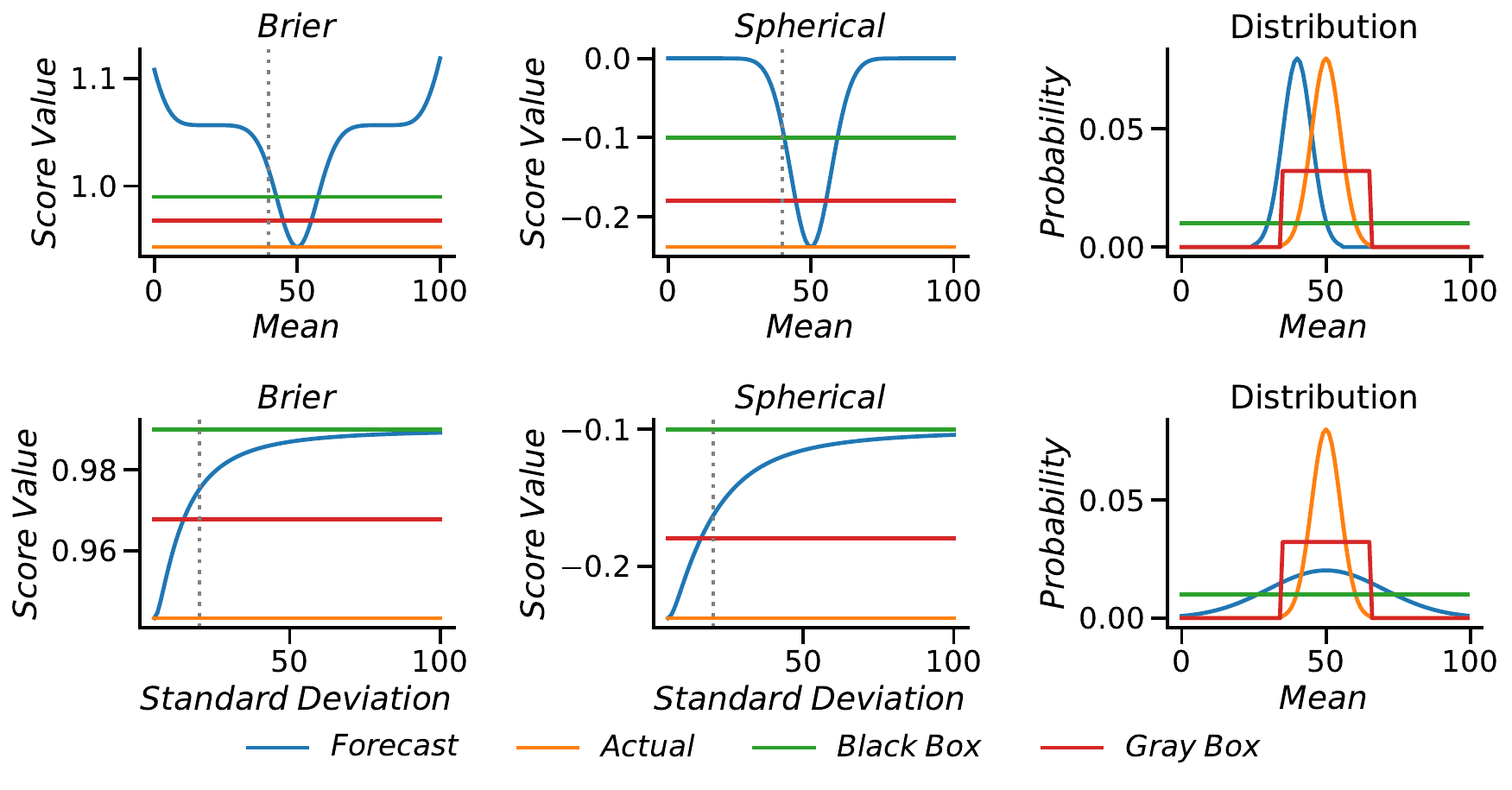}
    \caption{Behaviour of the expected Brier and spherical score: if the forecast mean is changed (Top); if the forecast standard deviation is changed (Bot). The gray line indicates the corresponding example in the right column.} 
    \label{fig:expected_score}
\end{figure}

\paragraph{Runtime.}
We evaluated the runtime of our monitor using the same discretised Gaussian distribution ranging from $10$ to $10^6$ values. Because our monitor is history-independent, we evaluated our monitor on a trace of length $10^4$ and averaged the runtime per iteration. We observe linear scaling and similar behaviour for both scoring rules. For example, for the Brier score and $10$ values, one iteration required on average $197\pm86\mu s$ of which $42\pm 16 \mu s$ is attributed to the scoring function. For $10^6$ values we have $259\pm 108\mu s$ and $72\pm 40 \mu s$ respectively.



\subsection{Average and Differential Alignment Monitoring}
We evaluate the average and the differential alignment monitor using the discrete-time Markov chains from the PRISM benchmark suite~\cite{kwiatkowska2012prism}.

\paragraph{Environments.}
We consider the transition matrices 
$\mathrm{Bench} = \{$Brp(16,2), Conditional, Crowds(5,5), Crowds(4,3), Die, Leader(3,5), Nand(5,2), Quantiles$\}$. To avoid bottom strongly connected components (BSCC), we add to each state a $0.01$ probability of returning to the initial state.

\paragraph{Models.}
We considered two forecast models obtained by corrupting each transition matrix $\ttrm$ in $\mathrm{Bench}$.
The corruption 
\emph{additive noise} $\etrm_{A}$ adds a scaled centred uniformly distributed random variable to the transition matrix, i.e., for every $\cS, \cS'\in \aS$ we have $\ttrm(\cS, \cS') + 0.1\cdot R$ where $R\sim \mathrm{Uniform}([-0.5, 0.5])$, and normalise each row.
The corruption \emph{invert} $\etrm_{I}$ computes the multiplicative inverse for every non-zero transition probability, i.e., for every $\cS, \cS'\in \aS$ we have $1/\ttrm(\cS, \cS') $ if $\trm(\cS, \cS')>0$, and normalise each row.

\paragraph{Reference models.}
We consider three reference models obtained from each Markov chain in 
$\mathrm{Bench}$. The black-box model $\etrm_{B}^{\rf}$, i.e., a uniform distribution over $\aS$, and the gray-box model $\etrm_{G}^{\rf}$, i.e., a uniform distribution over the successor states, as defined in Example~\ref{ex:mdp3}, an expert model $\etrm_{E}^{\rf}$, and the environment model $\ttrm$. The expert model is obtained by averaging the transition matrix of the environment and the gray-box model, i.e., $0.5 \cdot \ttrm + 0.5 \cdot \etrm_{G}$.

\paragraph{Average alignment monitor.}
For each environment $\ttrm\in \mathrm{Bench}$, each forecast model $\etrm\in \{\etrm_A, \etrm_I\}$, and each scoring function $\score\in \{ \score_B, \score_S\}$ we deploy our average alignment monitor once for $1000$ steps. Example runs for Crowds(4,3) are depicted in the first row of Figure~\ref{fig:prism}.
We add the average expected score for the reference models $\{\etrm_B^{\rf}, \etrm_G^{\rf}, \ttrm\}$ in order to place the estimated score in context. 
The values for $\etrm_B^{\rf}$ and $\etrm_G^{\rf}$ can always be computed.

\paragraph{Differential alignment monitor.}
For each environment $\ttrm\in \mathrm{Bench}$, each forecast model $\etrm\in \{\etrm_A, \etrm_I\}$, each reference model $\etrm^{\rf}\in \{\etrm_B^{\rf}, \etrm_G^{\rf}, \etrm_{E}^{\rf}, \ttrm\}$, and each scoring function $\score\in \{ \score_B, \score_S\}$ we deploy our differential alignment monitor $5$ times each for $1000$ steps. We record the first point in time where the monitor can make a decision. That is, if the monitor's upper bound is below $0$, then $\etrm$ is better aligned than $\etrm^{\rf}$; if the lower bound is above $0$, the inverse holds. We average the results and present them in Table~\ref{tab:prism_paper}. 
Example executions for Crowds(4,3) are depicted in the second row of Figure~\ref{fig:prism}.
The inversion of the probability distribution performed for $\etrm_I$ is a severe corruption of the transition probabilities. As a consequence, the monitor declares in most cases that $\trm^{\rf}$ is the better performing model, after only a few observations. By contrast, $\etrm_A$ is obtained by adding additive noise, which is a less severe corruption. As a consequence, it is less clear whether $\trm^{\rf}$ or $\etrm_A$ is better, which is also reflected in the need for more observations.
By construction, we know that the expert model is better aligned than the gray-box model, which is better aligned than the black-box model.
This is reflected in the decisions of the monitor, e.g., if $\etrm$ outperforms $\etrm_G^{\rf}$ then it must outperform $\etrm_B^{\rf}$.
The number of observations until a decision is a reflection of how difficult it is to distinguish the model from the reference model. 
\begin{figure}
    \centering
    \includegraphics[width=1\linewidth]{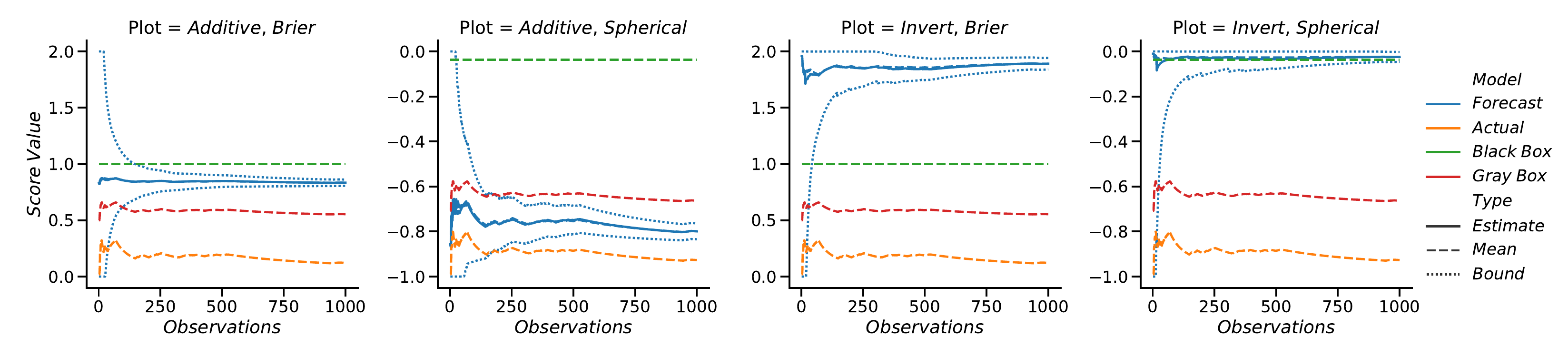}
    \includegraphics[width=1\linewidth]{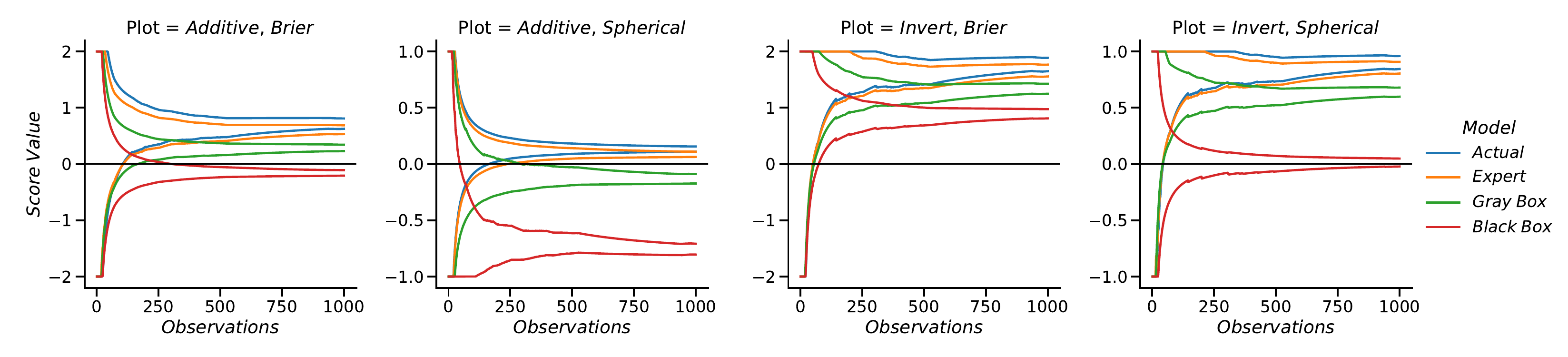}
    \caption{Example executions of the average alignment monitor (top) and the differential alignment monitor (bot) on the PRISM benchmark Crowds(4,3).}
    \label{fig:prism}
\end{figure}

\begin{table}[h]
    \centering
    \scriptsize
\begin{tabular}{llllll}
\toprule
 & Model & Environment & Expert & Gray  Box & Black  Box \\
Benchmark & Predictor &  &  &  &  \\
\midrule
\multirow[t]{2}{*}{Brp  (16,2)} & Additive & $70.6 \pm 17.17\; (\bot)$ & $72.8 \pm 8.56\; (\bot)$ & $175.2 \pm 2.49\; (\top)$ & $236.0 \pm 1.41\; (\top)$ \\
 & Invert & $39.0 \pm 5.79\; (\bot)$ & $40.4 \pm 4.16\; (\bot)$ & $52.0 \pm 2.83\; (\bot)$ & $78.0 \pm 9.46\; (\bot)$ \\
\multirow[t]{2}{*}{Conditional} & Additive & $1000.0 \pm 0.0\; (?)$ & $478.6 \pm 68.11\; (\top)$ & $113.0 \pm 23.45\; (\top)$ & $73.2 \pm 17.17\; (\top)$ \\
 & Invert & $28.6 \pm 5.81\; (\bot)$ & $29.4 \pm 5.37\; (\bot)$ & $33.4 \pm 5.37\; (\bot)$ & $40.6 \pm 5.81\; (\bot)$ \\
\multirow[t]{2}{*}{Crowds (5,5)} & Additive & $99.4 \pm 15.92\; (\bot)$ & $100.8 \pm 15.01\; (\bot)$ & $126.6 \pm 15.37\; (\bot)$ & $1000.0 \pm 0.0\; (?)$ \\
 & Invert & $51.6 \pm 4.22\; (\bot)$ & $51.2 \pm 3.63\; (\bot)$ & $54.0 \pm 4.36\; (\bot)$ & $68.0 \pm 15.52\; (\bot)$ \\
\multirow[t]{2}{*}{Die} & Additive & $1000.0 \pm 0.0\; (?)$ & $540.0 \pm 61.62\; (\top)$ & $120.2 \pm 26.35\; (\top)$ & $63.0 \pm 16.23\; (\top)$ \\
 & Invert & $31.0 \pm 5.61\; (\bot)$ & $31.8 \pm 5.17\; (\bot)$ & $34.4 \pm 5.37\; (\bot)$ & $50.0 \pm 8.22\; (\bot)$ \\
\multirow[t]{2}{*}{Leader (3,5)} & Additive & $84.8 \pm 18.09\; (\bot)$ & $94.8 \pm 4.15\; (\bot)$ & $351.4 \pm 3.44\; (\bot)$ & $125.0 \pm 1.73\; (\top)$ \\
 & Invert & $34.2 \pm 4.92\; (\bot)$ & $35.8 \pm 4.02\; (\bot)$ & $40.0 \pm 4.47\; (\bot)$ & $55.6 \pm 10.41\; (\bot)$ \\
\multirow[t]{2}{*}{Nand  (5,2)} & Additive & $84.4 \pm 9.94\; (\bot)$ & $79.8 \pm 9.36\; (\bot)$ & $130.8 \pm 2.59\; (\bot)$ & $561.6 \pm 10.6\; (\top)$ \\
 & Invert & $42.6 \pm 5.13\; (\bot)$ & $42.2 \pm 4.15\; (\bot)$ & $52.0 \pm 3.39\; (\bot)$ & $76.6 \pm 9.45\; (\bot)$ \\
\multirow[t]{2}{*}{Quantiles} & Additive & $1000.0 \pm 0.0\; (?)$ & $525.6 \pm 64.86\; (\top)$ & $128.8 \pm 33.57\; (\top)$ & $82.2 \pm 20.09\; (\top)$ \\
 & Invert & $40.2 \pm 9.31\; (\bot)$ & $42.4 \pm 9.79\; (\bot)$ & $47.8 \pm 10.35\; (\bot)$ & $61.0 \pm 14.88\; (\bot)$ \\
\multirow[t]{2}{*}{crowds-4-3} & Additive & $105.0 \pm 17.25\; (\bot)$ & $104.6 \pm 17.34\; (\bot)$ & $168.8 \pm 12.54\; (\bot)$ & $289.0 \pm 18.64\; (\top)$ \\
 & Invert & $46.4 \pm 3.65\; (\bot)$ & $44.6 \pm 4.72\; (\bot)$ & $48.8 \pm 4.27\; (\bot)$ & $63.6 \pm 14.66\; (\bot)$ \\
\bottomrule
\end{tabular}
    \caption{Average number of observations until decision. $\bot$ implies $\etrm^{\rf}$ is better than $\etrm$, $\top$ implies $\etrm^{\rf}$ is worse than $\etrm$, $?$ indicates indecision. A smaller number of observations is better, indicating an earlier differentiation of $\etrm^{\rf}$ and $\etrm$.}
    \label{tab:prism_paper}
\end{table}

\subsection{Weighted Alignment Monitoring}
We evaluate the weighted alignment monitor on Example~\ref{ex:fair} and Example~\ref{ex:safe}, using both the weighted Brier and the weighted spherical scoring rule, obtained through the transformation in Example~\ref{ex:weighted_rules}. 

\paragraph{Fairness.}
The environment transition matrix $\ttrm$ is taken from Example~\ref{ex:fair}, the model transition matrix $\etrm$ is obtained by flipping the transition probabilities of $S$ and the $G$ states, e.g., $\prob_{\etrm}(A\mid S)=0.2$ instead of $\prob_{\ttrm}(A\mid S)=0.8$. We use the alignment functions in Example~\ref{ex:weighted}, i.e., $\afoo(A)=\afoo(B)=1$ and $0$ otherwise.
In Figure~\ref{fig:weighted} we observe that the weighted alignment monitor does not distinguish the model from the environment, i.e., the alignment score computed w.r.t. the actual environment is contained in the interval. By contrast, the average alignment monitor clearly distinguishes the two. 
We notice that the bounds converge slower for the weighted monitor.

\paragraph{Safety.}
The environment transition matrix $\ttrm$ is taken from Example~\ref{ex:safe}, the model transition matrix $\etrm$ is obtained by attributing $0.1$ to each dotted transition, i.e., $\prob(s_5\mid s_6)=\prob(s_2\mid s_6)=0.1$ and $\prob(s_3\mid s_6)=0.8$. 
We use the alignment functions in Example~\ref{ex:weighted}, i.e., for all 
$s\in C \coloneqq \{s_3, s_5, s_6\}$ we have $\afoo(s)=1$ and $0.1$ otherwise; $\bfoo(s,s')=1$ if $c\in C$ and $s'\not\in C$, and $0.05$ otherwise. 
In Figure~\ref{fig:weighted} we observe that the weighted alignment monitor better distinguishes the model from the environment, i.e., the alignment score computed w.r.t.\ actual environment exits the interval earlier.  
We notice that the bounds converge slower for the weighted monitor.

\begin{figure}
    \centering
    \includegraphics[width=1\linewidth]{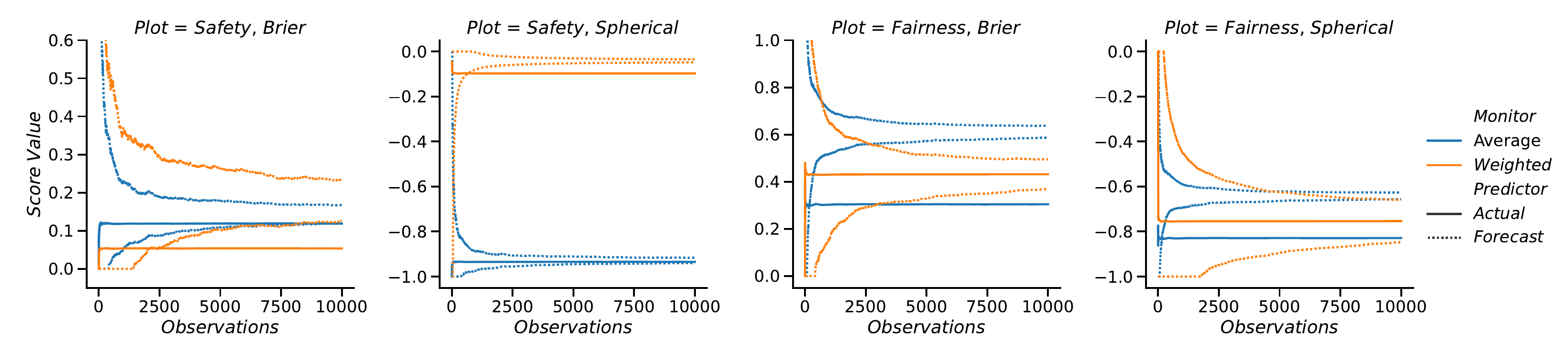}
    \caption{Example executions of the weighted alignment monitor on Example~\ref{ex:fair} (Fairness) and~\ref{ex:safe} (Safety).}. 
    \label{fig:weighted}
\end{figure}

\paragraph{Discussion.}
Both of the above examples demonstrate how weighted alignment monitors enable specification-specific alignment monitoring. In the fairness setting, the model and the actual environment are aligned on the two states relevant for the property value. Hence, considering the specification, the two models should not be distinguished. The weighted alignment monitor demonstrates this behaviour, because it focuses only on the property relevant states, while the general alignment monitor does not.  
In the safety setting, our system is safe as long as the BSCC in the model remains a BSCC in reality. 
The weighted alignment monitor emphasises the transitions away from this BSCC, and is therefore more sensitive to safety critical model misalignments.

\section{Related Work}
In this section we consider related work from verification and control, runtime methods, and (machine) learning.

\paragraph{Verification and control.}
Formal verification traditionally assumes that the model faithfully represents the real system. 
However, 
A recent line of work develops verification methods that can provide guarantees, even when the knowledge about the model is imprecise. This includes verification methods for robust MDPs, which are MDPs where the transition function is not known, but lies in an uncertainty set ~\cite{suilen2024robust}. This class of MDPs includes models such as bounded-parameter MDPs~\cite{givan2000bounded} or interval MDPs~\cite{jonsson1991specification,haddad2018interval}.
Robust MDPs can be defined by experts or learned from data allowing sound probably approximately correct verification~\cite{ashok2019pac}. Even if learned from data, there is no guarantee that the sampled data is representative of reality--especially if the data is obtained from a simulator.
In control theory, the alignment problem is known under model-plant mismatch, with works on quantifying model fidelity~\cite{badwe2010quantifying} and detecting a mismatch between model and reality~\cite{badwe2009detection}. In reinforcement learning it is known as the sim-to-real gap, which is concerned with bridging potential differences between reality and simulation~\cite{trentsios2022overcoming,stoffregen2020reducing}.

\paragraph{Runtime methods.}
There is work on monitoring the mismatch between the model and reality~\cite{halle2023leveraging,hosseinkhani2024model}. 
Both works address deterministic systems and detect misalignment using temporal logic specifications. 
In the verification of cyber-physical systems, the misalignment problem is also considered. For example Desai~\cite{desai2017combining} utilise signal temporal logic monitors to detect the violation of assumptions made during the model checking process. 
In reinforcement learning, approaches such as model-ensembles and runtime model falsification are utilized to overcome the sim-to-real gap at runtime~\cite{yamagata2020falsification,fulton2019verifiably}.
In our work, we are monitoring a hidden quantity, i.e., the expected score from realisations only. 
Hence, our monitor has to compute its verdict from partial observations. There is ample work on monitoring with imperfect or partial observations~\cite{ferrando2022runtime,junges2021runtime}.
In our work the hidden quantity is the expected value of a distribution, which can therefore be inferred using statistical methods. This is similar to a recent line of work focused on monitoring group fairness~\cite{henzinger2023runtime,henzinger2023monitoring}. 

\paragraph{Learning.}
A rich body of work addresses the problem of models facing changing data distributions or environments, often termed concept drift. Here they focus monitoring changes in a single distribution, e.g., a machine learning model is trained w.r.t.\ its training distribution. This distribution may shift over time \cite{hinder2023one,ditzler2015learning}. Formal verification models are usually stateful, with drastic differences in the successor distribution between time steps. This is why techniques from the sequential forecasting literature are more appropriate. 
In our work, we directly apply the techniques developed by Choe et al.~\cite{choe2021comparing} and Howard et al.~\cite{howard2021time} to build our monitors. Choe et al.~\cite{choe2021comparing} uses the confidence sequences developed by Howard et al.~\cite{howard2021time} to evaluate whether one forecaster outperforms another on average, i.e., our differential alignment monitoring problem. 
Together with Henzi et al.~\cite{henzi2022valid}, which develop a statistical test for detecting whether a forecaster outperforms another on every past prediction, they are the first to develop time-uniform statistical guarantees for forecaster evaluation. 

\paragraph{Property dependency.}
In formal verification, we are primarily interested in whether a system satisfies a given specification. Hence, we should be able to define alignment monitors with respect to a specification. Although weighted alignment monitors can adapt to a given property, this paper lacks a principled method for synthesizing a property-specific monitor. In developing such monitors, the literature on conformance testing may be particularly relevant. In conformance testing, the quality of a white-box model is assessed against a black-box model~\cite{roehm2019model}, potentially at runtime using monitors~\cite{mitsch2016modelplex}.
Model quality is evaluated using qualitative conformance relations~\cite{roehm2016reachset} or distance measures~\cite{abbas2014formal} on their output traces.
Most existing work, however, focuses on non-stochastic systems~\cite{roehm2019model}. In a stochastic setting, approaches akin to reward scaling may support the development of property-specific alignment monitors—particularly because, similar to our weighted alignment monitors, reward scaling emphasizes or penalizes behaviors in reinforcement learning to enforce formal specifications, such as safety~\cite{qian2023reward}.

\section{Conclusion}
The guarantees obtained by formal verification rely on the fidelity of the model. If the model is misaligned with reality, those guarantees can no longer be trusted. 
We introduced alignment monitoring as a runtime technique to assess whether a model used in formal verification remains aligned with the actual system behaviour. Our method leverages scoring rules and confidence sequences to track the predictive capabilities of a model over time. This requires no assumptions on the environment. We extended this with a differential and a weighted alignment monitor. The differential alignment monitor, compares the model against a trusted baseline, and the weighted alignment monitor, allows the flexibility to emphasise critical predictions. We evaluated our monitors on synthetic examples and on the PRISM benchmark suite. 

A clear extension is to develop more sophisticated alignment scores. This could include: utilising scoring rules defined over sequences of $k$ prediction in the construction of our average alignment monitor, as suggested in Choe et al.~\cite{choe2021comparing}; or the development of alignment scores tailored to formal verification applications. Another line of extension is to enrich the setting by considering alignment monitoring under partial observability.


\begin{credits}
\subsubsection*{\ackname} This work is supported by the European Research Council under Grant No.: ERC-2020-AdG 101020093.

\subsubsection*{\discintname}
The authors have no competing interests to declare that are
relevant to the content of this article. 
\end{credits}

\bibliographystyle{splncs04}
\bibliography{references}

\begin{thebibliography}{10}
\providecommand{\url}[1]{\texttt{#1}}
\providecommand{\urlprefix}{URL }
\providecommand{\doi}[1]{https://doi.org/#1}

\bibitem{abbas2014formal}
Abbas, H., Mittelmann, H., Fainekos, G.: Formal property verification in a conformance testing framework. in 2014 twelfth acm. In: IEEE Conference on Formal Methods and Models for Codesign (MEMOCODE). pp. 155--164 (2014)

\bibitem{allen2024weighted}
Allen, S.: Weighted scoringrules: Emphasizing particular outcomes when evaluating probabilistic forecasts. Journal of Statistical Software  \textbf{110},  1--26 (2024)

\bibitem{allen2023evaluating}
Allen, S., Ginsbourger, D., Ziegel, J.: Evaluating forecasts for high-impact events using transformed kernel scores. SIAM/ASA Journal on Uncertainty Quantification  \textbf{11}(3),  906--940 (2023)

\bibitem{ashok2019pac}
Ashok, P., Kwiatkowska, M.: Pac statistical model checking for markov decision processes and stochastic games. In: TACAS (2019)

\bibitem{badwe2009detection}
Badwe, A.S., Gudi, R.D., Patwardhan, R.S., Shah, S.L., Patwardhan, S.C.: Detection of model-plant mismatch in mpc applications. Journal of Process Control  \textbf{19}(8),  1305--1313 (2009)

\bibitem{badwe2010quantifying}
Badwe, A.S., Patwardhan, R.S., Shah, S.L., Patwardhan, S.C., Gudi, R.D.: Quantifying the impact of model-plant mismatch on controller performance. Journal of Process Control  \textbf{20}(4),  408--425 (2010)

\bibitem{baier2008principles}
Baier, C., Katoen, J.P.: Principles of model checking. MIT press (2008)

\bibitem{box1976science}
Box, G.E.: Science and statistics. Journal of the American Statistical Association  \textbf{71}(356),  791--799 (1976)

\bibitem{choe2021comparing}
Choe, Y.J., Ramdas, A.: Comparing sequential forecasters. Operations Research  \textbf{72}(4),  1368--1387 (2024)

\bibitem{desai2017combining}
Desai, A., Dreossi, T., Seshia, S.A.: Combining model checking and runtime verification for safe robotics. In: International Conference on Runtime Verification. pp. 172--189. Springer (2017)

\bibitem{ditzler2015learning}
Ditzler, G., Roveri, M., Alippi, C., Polikar, R.: Learning in nonstationary environments: A survey. IEEE Computational Intelligence Magazine  \textbf{10}(4),  12--25 (2015)

\bibitem{ferrando2022runtime}
Ferrando, A., Malvone, V.: Runtime verification with imperfect information through indistinguishability relations. In: International Conference on Software Engineering and Formal Methods. pp. 335--351. Springer (2022)

\bibitem{fulton2019verifiably}
Fulton, N., Platzer, A.: Verifiably safe off-model reinforcement learning. In: International Conference on Tools and Algorithms for the Construction and Analysis of Systems. pp. 413--430. Springer (2019)

\bibitem{givan2000bounded}
Givan, R., Leach, S., Dean, T.: Bounded-parameter markov decision processes. Artificial Intelligence  \textbf{122}(1-2),  71--109 (2000)

\bibitem{haddad2018interval}
Haddad, S., Monmege, B.: Interval iteration algorithm for mdps and imdps. Theoretical Computer Science  \textbf{735},  111--131 (2018)

\bibitem{halle2023leveraging}
Hall{\'e}, S., Soueidi, C., Falcone, Y.: Leveraging runtime verification for the monitoring of digital twins. FMDT@ FM  \textbf{3507} (2023)

\bibitem{henzi2022valid}
Henzi, A., Ziegel, J.F.: Valid sequential inference on probability forecast performance. Biometrika  \textbf{109}(3),  647--663 (2022)

\bibitem{henzinger2023runtime}
Henzinger, T., Karimi, M., Kueffner, K., Mallik, K.: Runtime monitoring of dynamic fairness properties. In: Proceedings of the 2023 ACM Conference on Fairness, Accountability, and Transparency. pp. 604--614 (2023)

\bibitem{henzinger2023monitoring}
Henzinger, T.A., Karimi, M., Kueffner, K., Mallik, K.: Monitoring algorithmic fairness. In: International Conference on Computer Aided Verification. pp. 358--382. Springer (2023)

\bibitem{hinder2023one}
Hinder, F., Vaquet, V., Hammer, B.: One or two things we know about concept drift--a survey on monitoring evolving environments. arXiv preprint arXiv:2310.15826  (2023)

\bibitem{holzmann2016weighted}
Holzmann, H., Klar, B.: Focusing on regions of interest in forecast evaluation. The Annals of Applied Statistics  \textbf{11}(4),  2404--2431 (2017), \url{http://www.jstor.org/stable/26362191}

\bibitem{hosseinkhani2024model}
Hosseinkhani, E., Leucker, M., Sachenbacher, M., Streichhahn, H., Vosteen, L.B.: A model-based approach for monitoring and diagnosing digital twin discrepancies. In: 35th International Conference on Principles of Diagnosis and Resilient Systems (DX 2024). pp.~2--1. Schloss Dagstuhl--Leibniz-Zentrum f{\"u}r Informatik (2024)

\bibitem{howard2021time}
Howard, S.R., Ramdas, A., McAuliffe, J., Sekhon, J.: Time-uniform, nonparametric, nonasymptotic confidence sequences. The Annals of Statistics  \textbf{49}(2),  1055--1080 (2021)

\bibitem{jonsson1991specification}
Jonsson, B., Larsen, K.G.: Specification and refinement of probabilistic processes. In: Proceedings 1991 Sixth Annual IEEE Symposium on Logic in Computer Science. pp. 266--267. IEEE Computer Society (1991)

\bibitem{junges2021runtime}
Junges, S., Torfah, H., Seshia, S.A.: Runtime monitors for markov decision processes. In: International Conference on Computer Aided Verification. pp. 553--576. Springer (2021)

\bibitem{kwiatkowska2012prism}
Kwiatkowska, M., Norman, G., Parker, D.: The prism benchmark suite. In: 9th International Conference on Quantitative Evaluation of SysTems. pp. 203--204. IEEE CS press (2012)

\bibitem{mitsch2016modelplex}
Mitsch, S., Platzer, A.: Modelplex: Verified runtime validation of verified cyber-physical system models. Formal Methods in System Design  \textbf{49}(1),  33--74 (2016)

\bibitem{pouget2021ranking}
Pouget, H., Chockler, H., Sun, Y., Kroening, D.: Ranking policy decisions. Advances in Neural Information Processing Systems  \textbf{34},  8702--8713 (2021)

\bibitem{pranger2024test}
Pranger, S., Chockler, H., Tappler, M., K{\"o}nighofer, B.: Test where decisions matter: Importance-driven testing for deep reinforcement learning. In: Conference on Neural Information Processing Systems (NeurIPS) (2024)

\bibitem{qian2023reward}
Qian, M., Mitsch, S.: Reward shaping from hybrid systems models in reinforcement learning. In: NASA formal methods symposium. pp. 122--139. Springer (2023)

\bibitem{roehm2016reachset}
Roehm, H., Oehlerking, J., Woehrle, M., Althoff, M.: Reachset conformance testing of hybrid automata. In: Proceedings of the 19th International Conference on Hybrid Systems: Computation and Control. pp. 277--286 (2016)

\bibitem{roehm2019model}
Roehm, H., Oehlerking, J., Woehrle, M., Althoff, M.: Model conformance for cyber-physical systems: A survey. ACM Transactions on Cyber-Physical Systems  \textbf{3}(3),  1--26 (2019)

\bibitem{stoffregen2020reducing}
Stoffregen, T., Scheerlinck, C., Scaramuzza, D., Drummond, T., Barnes, N., Kleeman, L., Mahony, R.: Reducing the sim-to-real gap for event cameras. In: Computer Vision--ECCV 2020: 16th European Conference, Glasgow, UK, August 23--28, 2020, Proceedings, Part XXVII 16. pp. 534--549. Springer (2020)

\bibitem{suilen2024robust}
Suilen, M., Badings, T., Bovy, E.M., Parker, D., Jansen, N.: Robust markov decision processes: A place where ai and formal methods meet. In: Principles of Verification: Cycling the Probabilistic Landscape: Essays Dedicated to Joost-Pieter Katoen on the Occasion of His 60th Birthday, Part III, pp. 126--154. Springer (2024)

\bibitem{trentsios2022overcoming}
Trentsios, P., Wolf, M., Gerhard, D.: Overcoming the sim-to-real gap in autonomous robots. Procedia CIRP  \textbf{109},  287--292 (2022)

\bibitem{yamagata2020falsification}
Yamagata, Y., Liu, S., Akazaki, T., Duan, Y., Hao, J.: Falsification of cyber-physical systems using deep reinforcement learning. IEEE Transactions on Software Engineering  \textbf{47}(12),  2823--2840 (2020)

\end{thebibliography}

\newpage

\appendix
\section{Appendix}
\subsection{Proofs}
Let $\rV$ and $\rW$ be the hidden and the observed process defined by the environment $\tenv$ and the model $\eenv$. Let $\rZ\coloneqq (\rX_t)_{t\in \pNN} $ be the stochastic process defined by $\tenv$. 
Let $\score_{\omega}\colon \sdistr(\aX)\times \aX \to [c_{\omega}\cdot a, c_{\omega} \cdot b]$ be a weighted scoring rule, let 
$\afoo\colon \aX^* \to [0,c_{\afoo}]$ be the prediction weight function, let $\bfoo\colon \aX^*\to (\aX \to [0,c_{\bfoo}])$ be the observation weight function, and $\conf\in (0,1)$ be a error probability threshold.
We define $t_{\afoo}(\cZ)\coloneqq \sum_{i=1}^t \afoo(\cZ_{1:i-1})$ as the weighted time, $\sgn_{\score}^{\afoo, \bfoo}\coloneqq c_{\afoo}c_{\bfoo}(b-a)$ as the weighted score bound, $\hat{\wscore}_t\coloneqq \hat{\wscore}_{\afoo, \bfoo}^{\score}(\rW_1, \dots, \rW_t)$ as the weighted average score process, and $\wscore_t\coloneqq \wscore_{\afoo, \bfoo}^{\score}(\rV_1, \dots, \rV_t)$ as the weighted average expected score process
\begin{lemma}
    \label{lemma:bound}
   The bound $\error_{t_{\afoo}}$, defined in Equation~\ref{eq:bound}, satisfies for every $\conf\in (0,1)$
    \begin{align*}
        \prob_{\tenv}(\exists t\in \pNN \colon |\hat{\wscore}_t - \wscore_t|\geq \error_{t_{\afoo}}(\evar_t,\conf, \sgn_{\score}) ) \leq \conf.
    \end{align*}
\end{lemma}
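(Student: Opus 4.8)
The plan is to recognize that Lemma~\ref{lemma:bound} is an instance of the time-uniform empirical-Bernstein confidence sequence of Howard et al.~\cite{howard2021time}, specialized to forecaster evaluation by Choe et al.~\cite{choe2021comparing}. The error radius $\error_{t_{\afoo}}$ of Equation~\ref{eq:bound} is exactly their confidence radius, so the task reduces to exhibiting $\hat{\wscore}_t - \wscore_t$, after clearing the normalization $t_{\afoo}$, as the normalized value of a \emph{bounded martingale} whose empirical variance is tracked by $\evar_t$. First I would fix the filtration $\filt_i \coloneqq \sigma(\rX_1, \dots, \rX_i)$ and introduce the per-step increment
\begin{align*}
\rD_i \coloneqq \afoo(\rZ_{1:i-1})\Bigl[\score_{\bfoo(\rZ_{1:i-1})}(\reY_i, \rX_i) - \expe_{\rX_i \sim \rtY_i}\bigl(\score_{\bfoo(\rZ_{1:i-1})}(\reY_i, \rX_i)\bigr)\Bigr],
\end{align*}
so that $t_{\afoo}(\rZ)\,(\hat{\wscore}_t - \wscore_t) = \sum_{i=1}^{t} \rD_i$ by the definitions of $\hat{\wscore}$ and $\wscore$.

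The key step, and the one I expect to be the only delicate point, is to show that $(\rD_i)_{i\in\pNN}$ is a martingale difference sequence with respect to $(\filt_i)_{i\in\pNN}$. Since $\rtY_i = \tenv(\rX_1, \dots, \rX_{i-1})$ and $\reY_i = \eenv(\rX_1, \dots, \rX_{i-1})$, and both weights $\afoo(\rZ_{1:i-1})$ and $\bfoo(\rZ_{1:i-1})$ are deterministic functions of $\rX_1, \dots, \rX_{i-1}$, all of these are $\filt_{i-1}$-measurable; the only fresh randomness at step $i$ is $\rX_i$, whose conditional law given $\filt_{i-1}$ is $\rtY_i$. Hence $\expe[\score_{\bfoo(\rZ_{1:i-1})}(\reY_i, \rX_i)\mid \filt_{i-1}] = \expe_{\rX_i \sim \rtY_i}(\score_{\bfoo(\rZ_{1:i-1})}(\reY_i, \rX_i))$, so the bracketed term has conditional mean zero and $\expe[\rD_i\mid\filt_{i-1}]=0$. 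This is precisely where the paper's standing assumption---that the true distribution is determined \emph{before} the outcome is observed---does all the work, and it is what lets the result hold with no assumption on $\tenv$.

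It then remains to check boundedness and the variance condition and to invoke the cited bound. Because the weighted score $\score_{\bfoo(\cdot)}$ ranges over an interval of width $c_{\bfoo}(b-a)$ and is scaled by $\afoo(\cdot)\le c_{\afoo}$, each centered increment obeys $|\rD_i|\le c_{\afoo}c_{\bfoo}(b-a)=\sgn_{\score}^{\afoo,\bfoo}$ almost surely. The process $\evar_t$ is the running sum of squared centered (weighted) increments with the $\max(1,\cdot)$ truncation keeping the stitching function $g(\evar_t,\conf)$ well-defined, which is exactly the predictable empirical-variance estimate the cited theorem requires. Applying the time-uniform empirical-Bernstein inequality of Howard et al.~\cite{howard2021time}, in the two-sided form used by Choe et al.~\cite{choe2021comparing}, to the martingale $\sum_{i\le t}\rD_i$ yields that $\bigl|\sum_{i=1}^{t}\rD_i\bigr|$ exceeds the numerator of $t_{\afoo}(\rZ)\cdot\error_{t_{\afoo}}$ for some $t$ with probability at most $\conf$; dividing by $t_{\afoo}(\rZ)$ gives the stated inequality. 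Finally I would note that this single lemma underwrites all three monitors: Theorem~\ref{theorem:general_alignment} is the case $\afoo\equiv 1,\ \bfoo\equiv 1$, Theorem~\ref{theorem:diff_alignment} follows by replacing the score with a score difference and doubling the range bound, and Theorem~\ref{theorem:weighted_alignment} is the lemma itself.
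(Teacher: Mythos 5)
Your proposal is correct and follows essentially the same route as the paper: both exhibit the unnormalized difference $t_{\afoo}\,(\hat{\wscore}_t - \wscore_t)$ as a sum of bounded, conditionally centered (weighted) score increments, invoke the Choe et al./Howard et al. empirical-Bernstein stitching machinery (the paper via their Proposition~1 test supermartingale plus the stitching bound, you via the packaged two-sided time-uniform inequality), and divide by the predictable weighted time $t_{\afoo}$. Your explicit verification of the martingale-difference property with respect to $\filt_i = \sigma(\rX_1,\dots,\rX_i)$ just unpacks what the paper delegates to the citation, so the two arguments coincide in substance.
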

\begin{proof}
    We define two new stochastic processes $\hat{\rQ}\coloneqq (\hat{\rQ}_t)_{t\in\pNN}$ where
\begin{align*}
\hat{\rQ}_t\coloneqq \afoo(\rZ_{1:t-1})\cdot\score_{\bfoo(\rZ_{1:t-1})}(\reY_t, \rX_t)
\end{align*}
and $\rQ\coloneqq (\rQ_t)_{t\in\pNN}$ where
\begin{align*}
\rQ_t\coloneqq \afoo(\rZ_{1:t-1})\cdot \expe_{\rX_t\sim \rtY_t}(\score_{\bfoo(\rZ_{1:t-1})}(\reY_t, \rX_t)).
\end{align*}
Notice that $\rQ_t$ is almost surely bounded on $\sgn_{\score}^{\afoo, \bfoo}$.
We define the sum processes 
\begin{align*}
    \hat{M}\coloneqq (\hat{M}_t)_{t\in \pNN} = \left(\sum_{i=1}^t \hat{Z}_i\right)_{t\in \pNN} \quad \text{and} \quad M\coloneqq (M_t)_{t\in \pNN} = \left(\sum_{i=1}^t Z_i\right)_{t\in \pNN}
\end{align*}
which are the weighted average processes $\hat{\wscore}_t$ and $\wscore_t$ without the normalisation by $t_{\afoo}$. 
Moreover, we define the variance process as 
\begin{align*}
    \evar_t' \coloneqq   \sum_{i=1}^t (\hat{\rQ}_i - \hat{\wscore}_{i-1})^2.
\end{align*}
Notice that $\hat{\wscore}_{i-1}$ is a predictable process, i.e., intentively it uses only information before the time $i$. 
By Proposition 1 in Choe et al.~\cite{choe2021comparing}, we know that for every $t\in \pNN$ the process
\begin{align*}
 P_t\coloneqq \exp((\hat{M}_t-M_t) - \psi_{\sgn_{\score}^{\afoo,\bfoo}}(\lambda) \evar_t')
\end{align*}
is a test super martingale, where  $ \psi_{c}(\lambda) \coloneqq c^{-2}\left(-\log(1 - c\lambda) - c\lambda\right)$, which is defined for $\left[0, \frac{1}{c}\right)$. In our case $c=\sgn_{\score}^{\afoo,\bfoo}$.
Hence, we can directly apply the stitching bound from Howard et al.~\cite{howard2021time} to obtain for $\evar_t\coloneqq \max(1,\evar_t')$
\begin{align*}
    \mathrm{CI}_t(\evar_t,\conf, \sgn_{\score}^{\afoo,\bfoo}) \coloneqq &\sqrt{2.13 \cdot \evar_t \cdot g(\evar_t,\conf) + 1.76\cdot (\sgn_{\score}^{\afoo, \bfoo})^2 \cdot g(\evar_t,\conf)^2}
     \\
     &+ 1.33 \cdot \sgn_{\score}^{\afoo, \bfoo} \cdot g(\evar_t,\conf)
\end{align*}
where $h(n,\conf)=\left(2 \cdot \log\left(\pi\log(n) / \sqrt{6}\right) + \log(2/\conf)\right) $. The constants are obtained by setting $\eta=e$, $m=1$, and $h(x)\coloneqq \frac{6}{\pi^2} \frac{1}{x^2}$. The $2$ in $2/\conf$ is due to the union bound to get the two sided tail bound. We are guaranteed that 
\begin{align*}
    \prob_{\tenv}\left( \exists t\in \pNN \colon |\hat{M}_t-M_t |\geq \mathrm{CI}_t(\evar_t,\conf, \sgn_{\score}^{\afoo,\bfoo})\right) \leq \conf.
\end{align*}
We divide both sides by the predictable $t_{\afoo}$ to obtain
  \begin{align*}
        \prob_{\tenv}(\exists t\in \pNN \colon |\hat{\wscore}_{\afoo, \bfoo}^{\score}(\rW_1, \dots, \rW_t)- \wscore_{\afoo, \bfoo}^{\score}(\rV_1, \dots, \rV_t)|\geq \error_{t_{\afoo}}(\evar_t,\conf, \sgn_{\score}) ) \leq \conf.
\end{align*}
\end{proof}

\begin{proof}[Theorem~\ref{theorem:weighted_alignment}]
    The fact that monitor $\moni_{\score, \conf}$ satisfies the condition in Equation~\ref{eq:sound} for $\dist=\wscore_{\afoo, \bfoo}^{\score}$ is a direct consequence of Lemma~\ref{lemma:bound}. We need three counter, one for $t$, one for the average, and one for the variance.  In addition, we need to maintain a list storing the entire history of observations. This requires linear memory. Moreover, we require linear time to evaluate the weight functions. Moreover, evaluating the scoring rule is a function of $|\aX|$.
\end{proof}

\begin{proof}[Theorem~\ref{theorem:general_alignment}]
This is a special case of Theorem~\ref{theorem:weighted_alignment}, where $\afoo$ is constant $1$, $\score_{\omega}$ is weight independent, and $c_{\bfoo}=1$. 
Because of history independence, the monitor does not need to store the history. Hence, we obtain constant runtime and space requirement w.r.t. the history. Only evaluating the scoring rule is a function of $|\aX|$.
\end{proof}

\begin{proof}[Theorem~\ref{theorem:diff_alignment}]
    The fact that monitor $\moni_{\score, \conf}$ satisfies the condition in Equation~\ref{eq:diff_sound} for $\dist=\escore^{\score}$ is a direct consequence of Theorem 2 in Choe et al.~\cite{choe2021comparing}.
    We only need to adjust $\sgn$, because the minimum value is $a-b$ and the maximum value is $b-a$, resulting in $2(b-a)$.
    The space and time complexity follows from Theorem~\ref{theorem:general_alignment}, as the only difference is that we need to evaluate two scoring functions.
\end{proof}

\subsection{Experiments}
In Figure~\ref{fig:binary} we show the behaviour of the expected Brier and the spherical score for a binary predictor for varying true probability values.
The black box predictor is set to $0.5$.
In Figure~\ref{fig:binary_moni_35} and ~\ref{fig:binary_moni_90} we depict example executions of the average alignment monitor for a true probability value of $0.35$ and $0.9$ respectively. The predicted probability values are top to bottom $0.1$, $0.6$, and $0.8$

If Figure~\ref{fig:expected_monitor} we depict example executions of the alignment monitor for the distributions in Figure~\ref{fig:expected_score}.
The left plot corresponds to the distribution on the bottom right. The right plot corresponds to the distribution on the top right. We can observe the effect of a larger standard deviation on the convergence speed of our monitor.  

We transform the environment matrices in the PRISM benchmark suit using various corruptions. We list them roughly below. 
\begin{itemize}
  \item $P_{\mathrm{sharp}}$: Each row of $P$ is sharpened by raising its entries to the power $4.0$, then normalised. High probabilities are amplified.

  \item $P_{\mathrm{supp}}$: Each row is resampled to have a new support, retaining one original non-zero entry and sampling the rest randomly. Default keep probability is $0.3$.

  \item $P_{\mathrm{noisy}}$: Uniform noise scaled by $0.1$ is added to each entry of $P$, clipped to non-negative values, and rows are renormalised.

  \item $P_{\mathrm{drop}}$: Entries in each row are randomly dropped with probability $0.4$, ensuring at least one non-zero value ($\text{min\_support} = 1$). Remaining mass is renormalised.

  \item $P_{\mathrm{swap}}$: For each row, the maximum and minimum entries are swapped, keeping the row normalised.

  \item $P_{\mathrm{collapse}}$: Each row is collapsed to concentrate nearly all mass on one randomly chosen entry, producing near-deterministic behaviour.

  \item $P_{\mathrm{bias}}$: Each row is biased towards a fixed target index ($\text{target} = 0$) by convex combination with a one-hot vector; bias strength is $0.55$.

  \item $P_{\mathrm{inv}}$: Each row’s non-zero entries are inverted (replaced by their reciprocals) and then renormalised, so low probabilities become dominant.

  \item $P_{\mathrm{flip}}$: For each non-zero entry $p$ in a row, it is replaced by $1 - p$ and the row is renormalised. Falls back to the original row if the sum vanishes.
\end{itemize}
In Figure~\ref{fig:average_prism_1} and~\ref{fig:average_prism_2} we depict example executions of the alignment monitor for each of the above corruptions and each PRISM benchmark, for both the Brier and the spherical score. 
Figure~\ref{fig:diff_prism_1} and~\ref{fig:diff_prism_1} shows the same for the differential alignment monitor, with the reference models constructed as in Section~\ref{sec:experiments}. Table~\ref{tab:term_1} and~\ref{tab:term_2} shows the average time at which the monitor makes a conclusive decision computed over 5 runs each 1000 steps.

\begin{figure}
    \centering
    \includegraphics[width=1\linewidth]{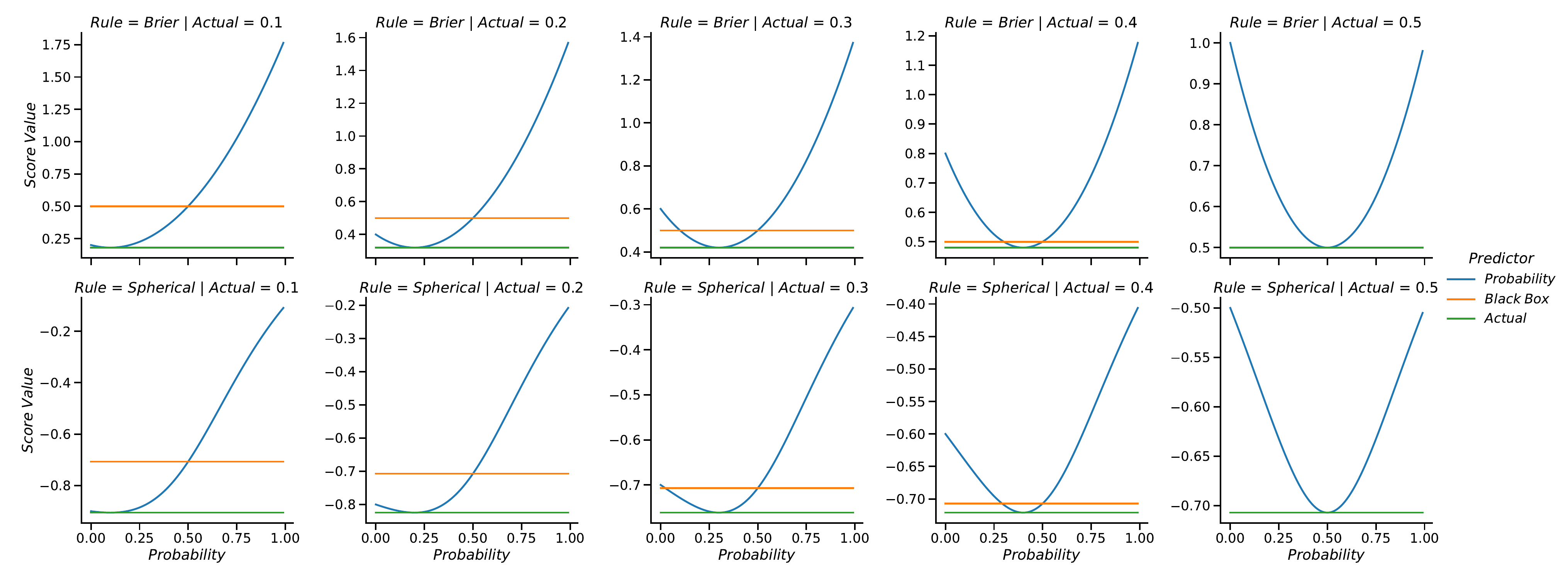}
    \caption{Behaviour of the expected Brier (top) and the spherical score (bot) for a binary predictor for varying true probability values}
    \label{fig:binary}
\end{figure}

\begin{figure}
    \centering
    \includegraphics[width=1\linewidth]{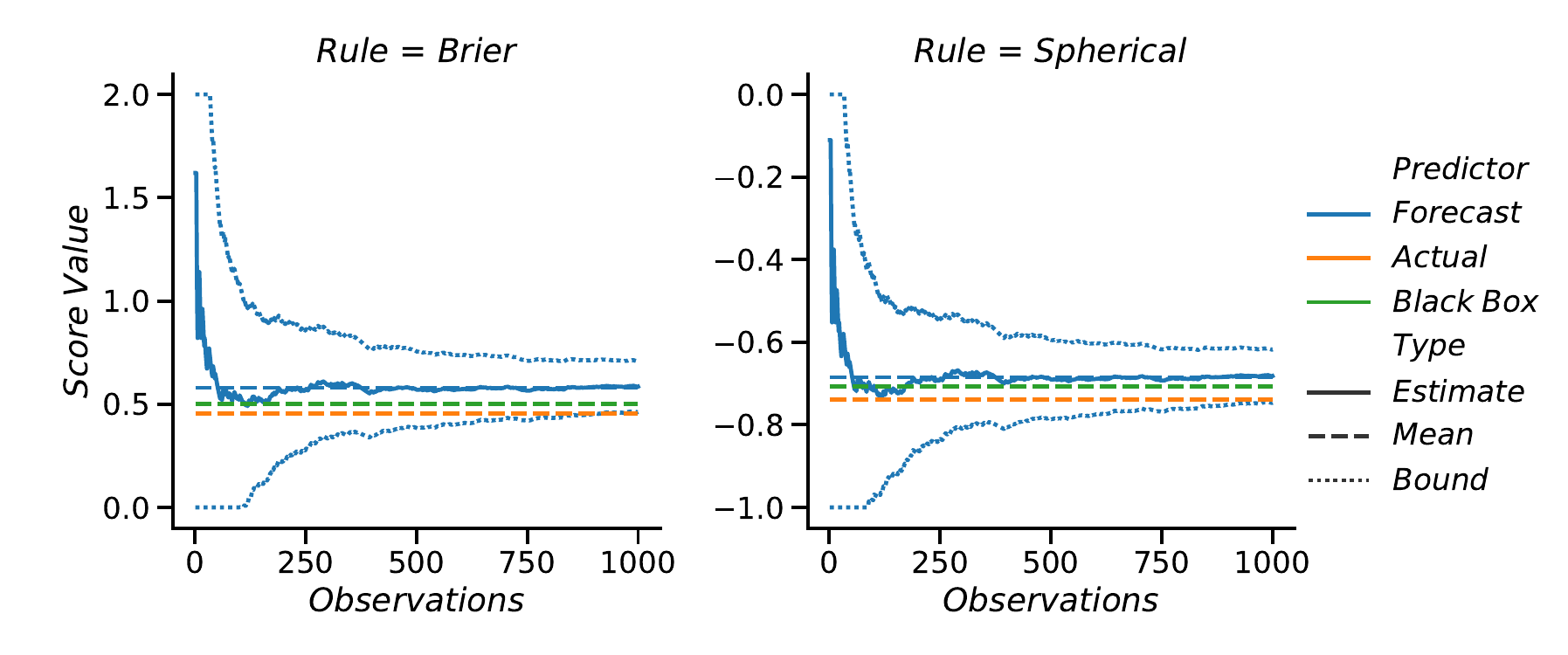}
    \includegraphics[width=1\linewidth]{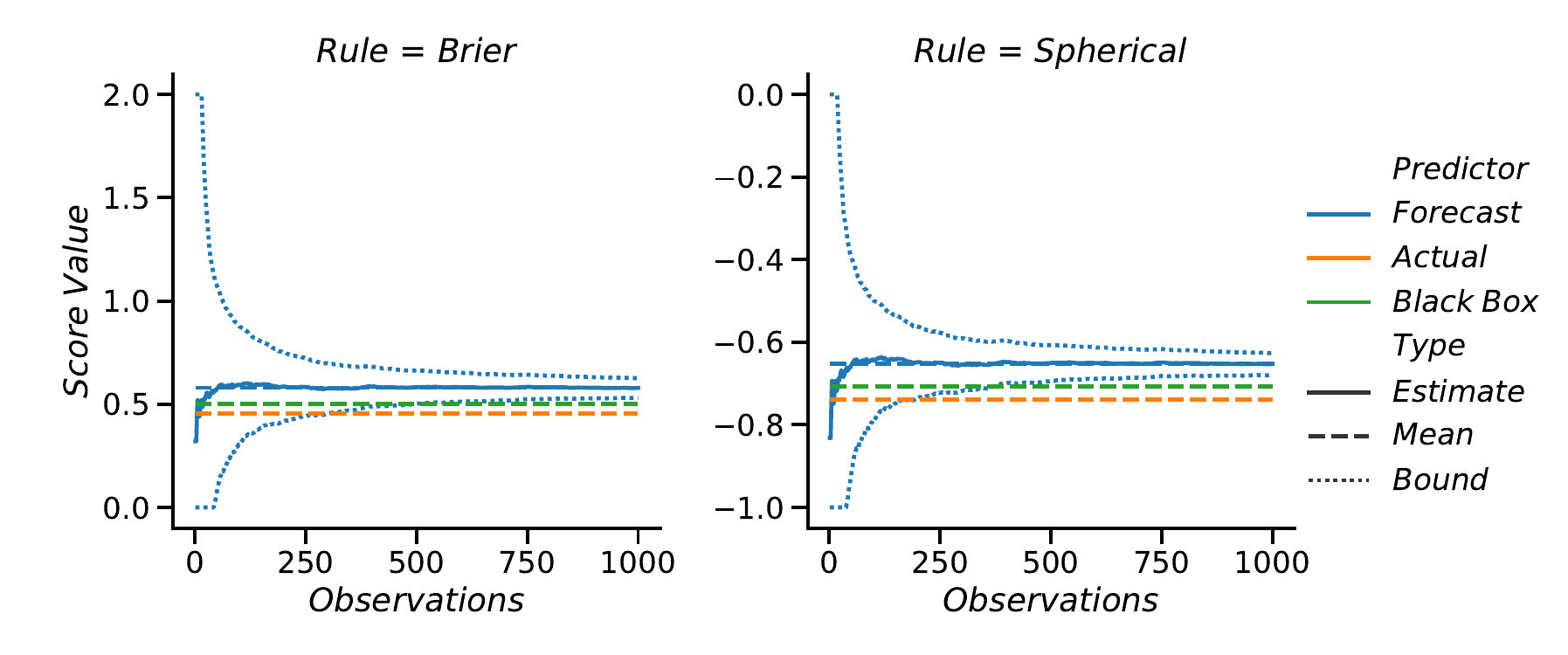}
    \includegraphics[width=1\linewidth]{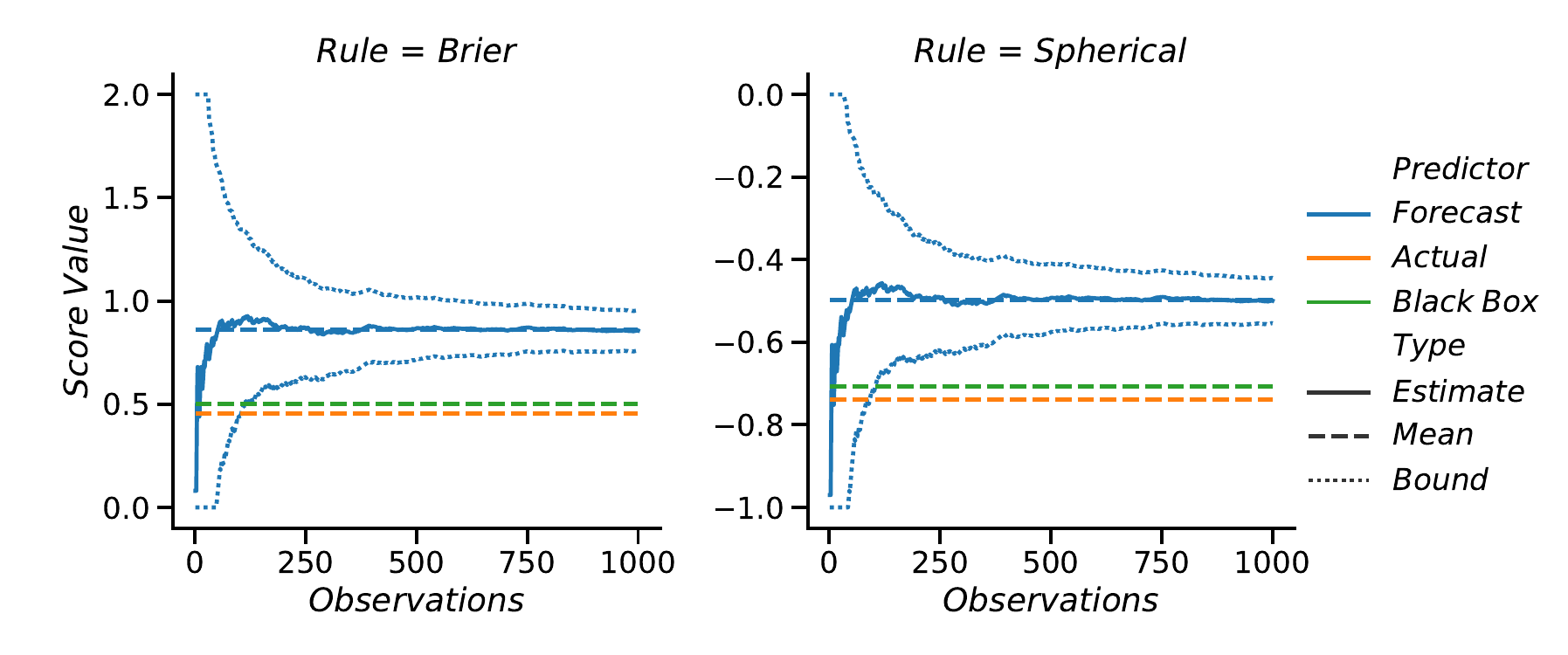}
    \caption{Example executions of the average alignment monitor for a true probability value of $0.35$. The predicted probability values are top to bottom $0.1$, $0.6$, and $0.8$.}
    \label{fig:binary_moni_35}
\end{figure}

\begin{figure}
    \centering
    \includegraphics[width=1\linewidth]{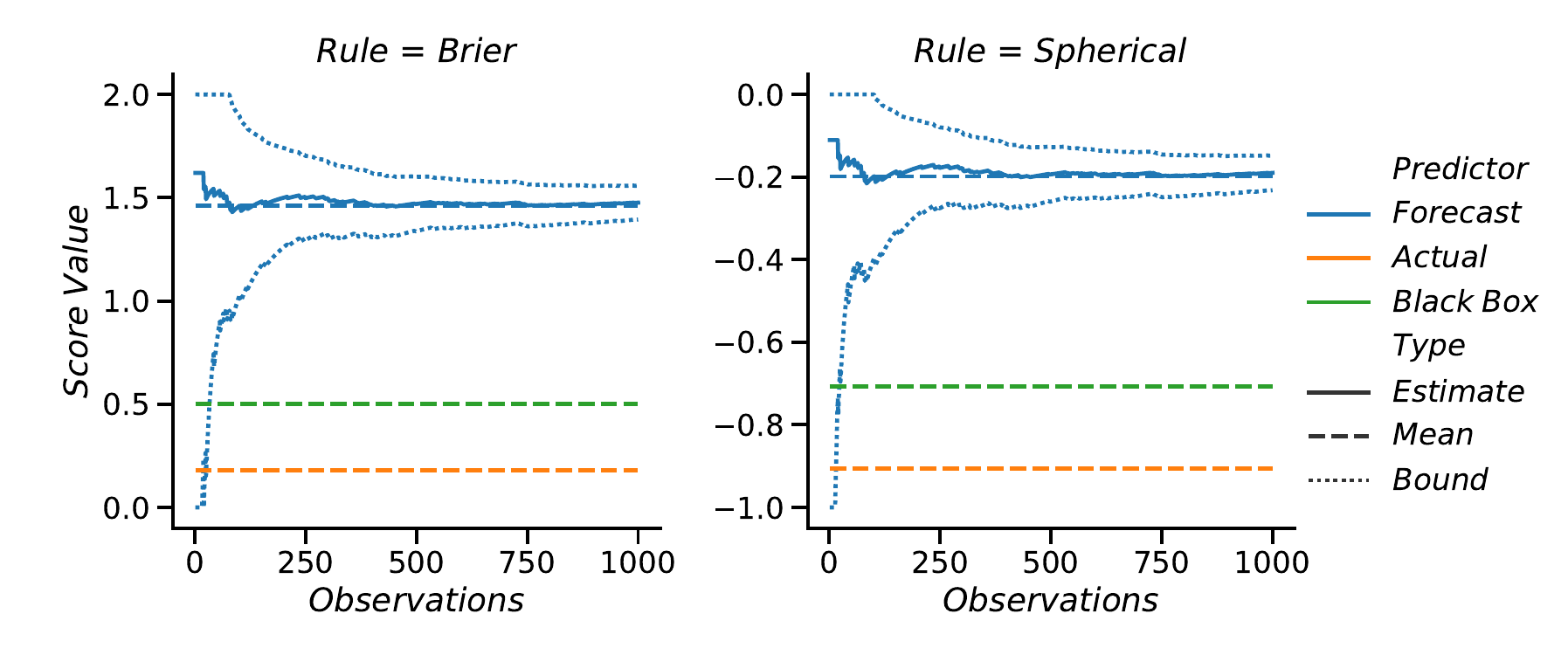}
    \includegraphics[width=1\linewidth]{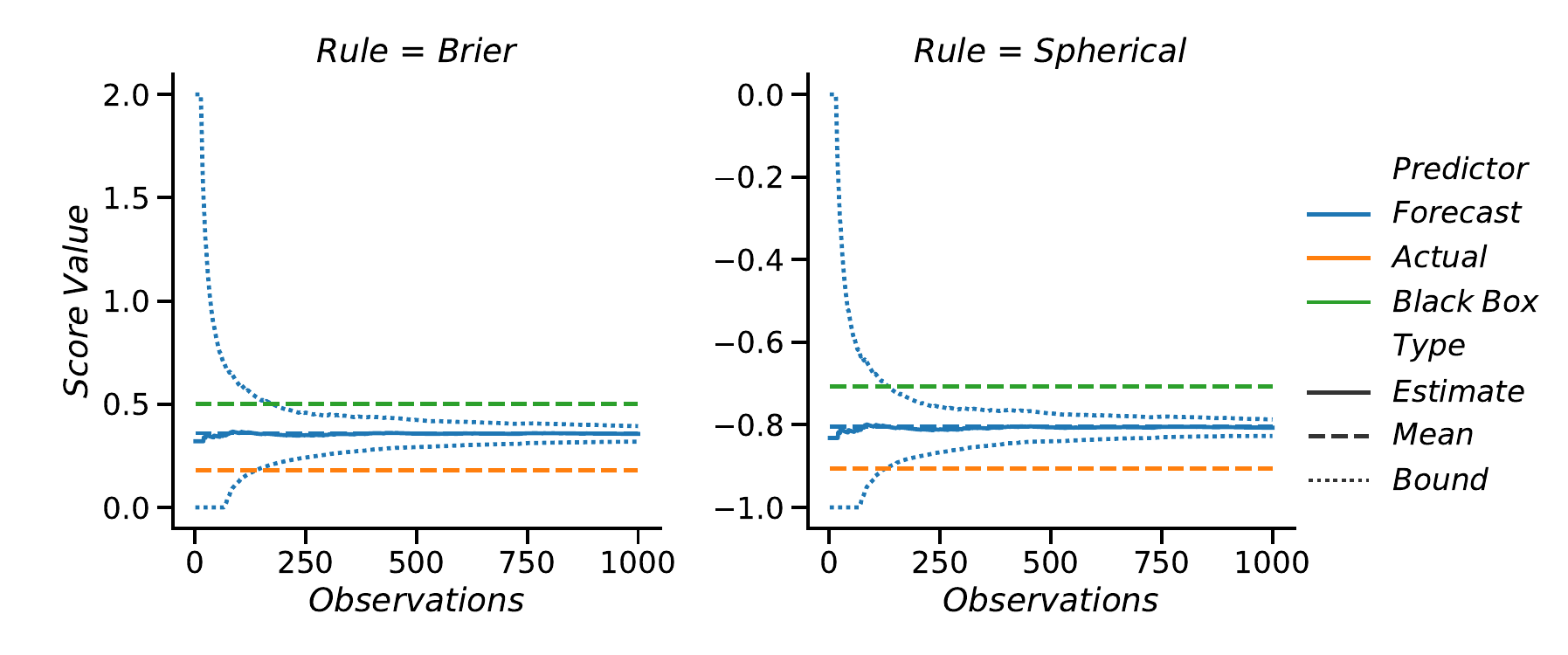}
    \includegraphics[width=1\linewidth]{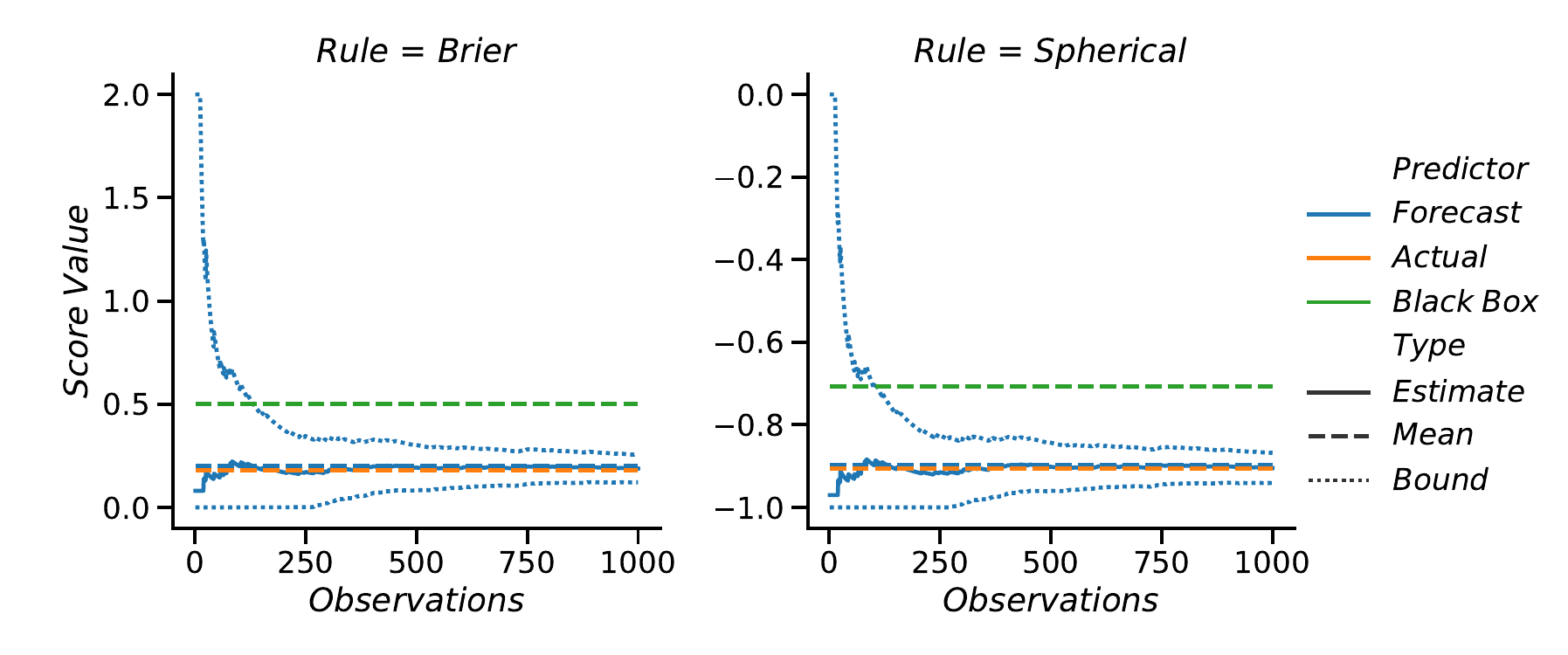}
    \caption{Example executions of the average alignment monitor for a true probability value of $0.9$. The predicted probability values are top to bottom $0.1$, $0.6$, and $0.8$.}
    \label{fig:binary_moni_90}
\end{figure}


\begin{figure}[ht]
    \centering
    \includegraphics[width=0.9\linewidth]{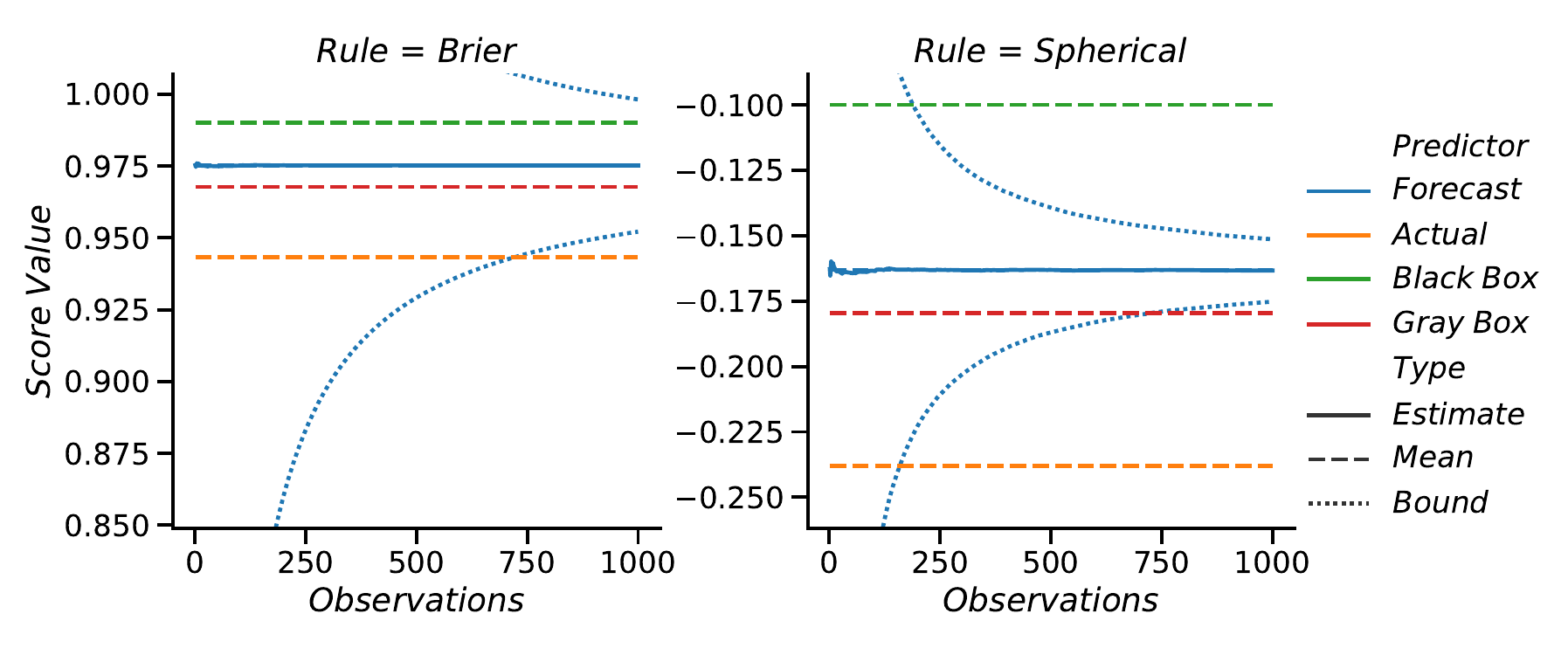}
    \caption{Example executions of the alignment monitor for the distributions in Figure~\ref{fig:expected_score}.
    The left plot corresponds to the distribution on the bottom right. The right plot corresponds to the distribution on the top right.}
    \label{fig:expected_monitor}
\end{figure}


\begin{sidewaysfigure}
    \centering
    \begin{tabular}{@{}p{0.2cm} c@{}}
        \raisebox{1\height}{\rotatebox{90}{brp-16-2}} & \includegraphics[width=0.9\linewidth]{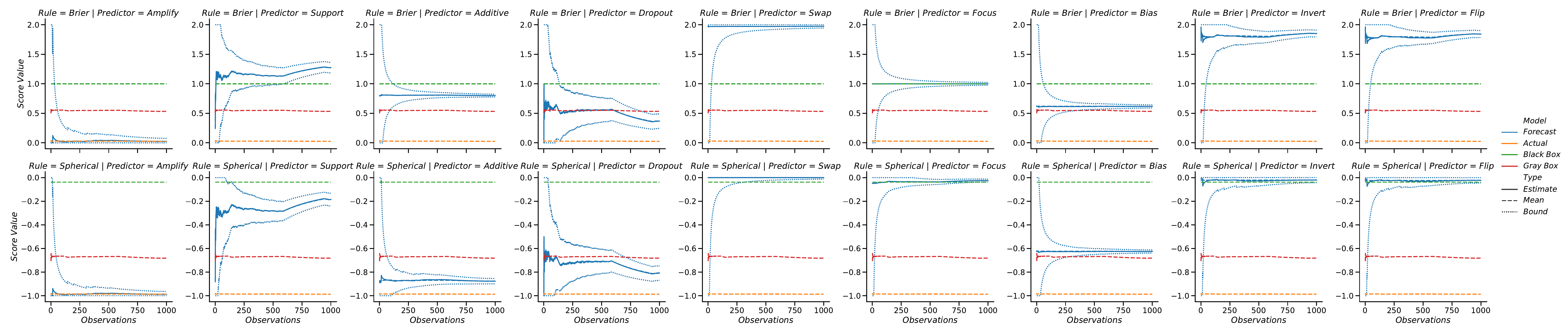} \\
        \raisebox{1\height}{\rotatebox{90}{crowds-4-3}} & \includegraphics[width=0.9\linewidth]{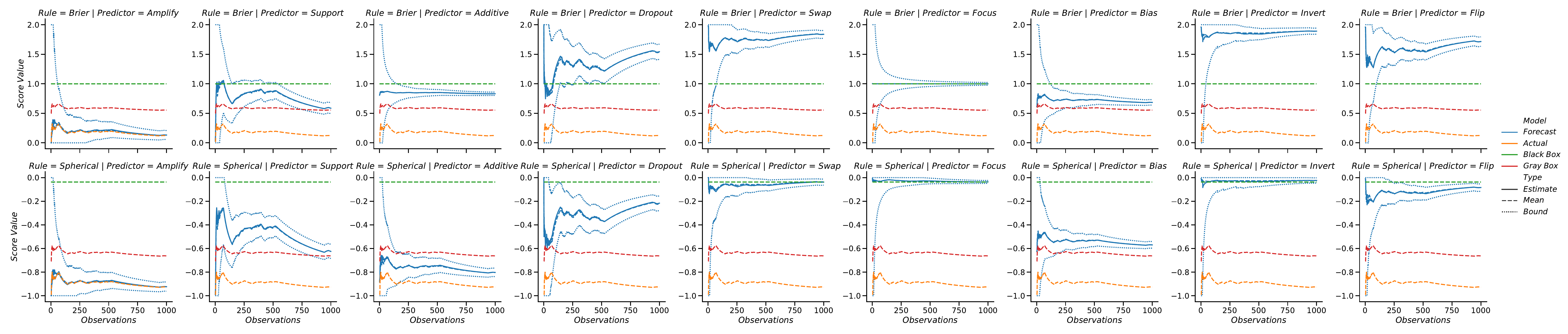} \\
        \raisebox{1\height}{\rotatebox{90}{crowds-5-5}} & \includegraphics[width=0.9\linewidth]{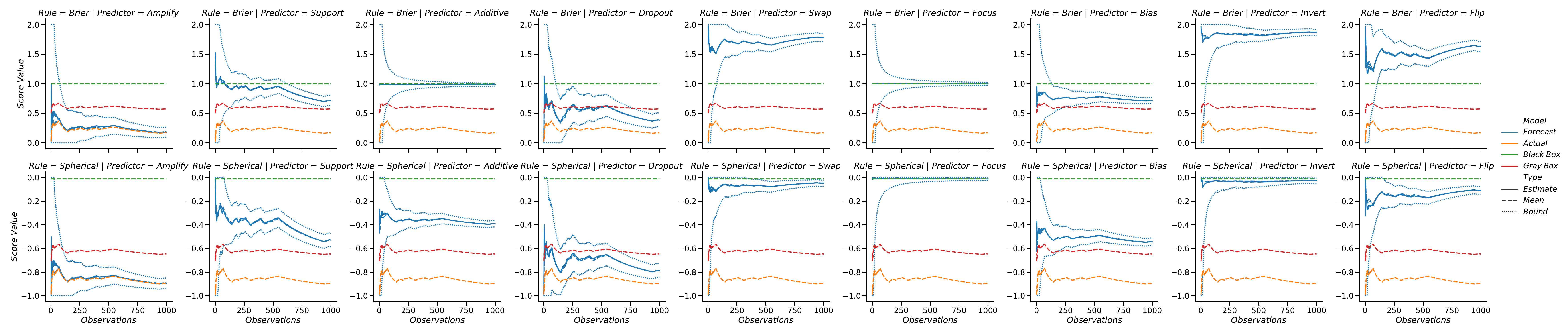} \\
        \raisebox{2\height}{\rotatebox{90}{die}} & \includegraphics[width=0.9\linewidth]{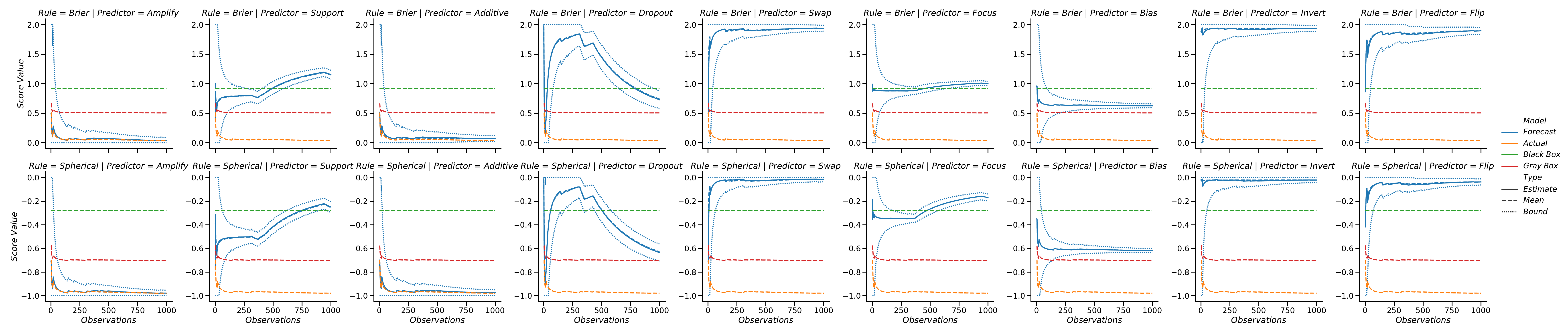} \\
    \end{tabular}
    \caption{Example executions of the average alignment monitor.}
    \label{fig:average_prism_1}
\end{sidewaysfigure}

\begin{sidewaysfigure}
    \centering
    \begin{tabular}{@{}p{0.2cm} c@{}}
        \raisebox{1\height}{\rotatebox{90}{leader-3-5}} & \includegraphics[width=0.9\linewidth]{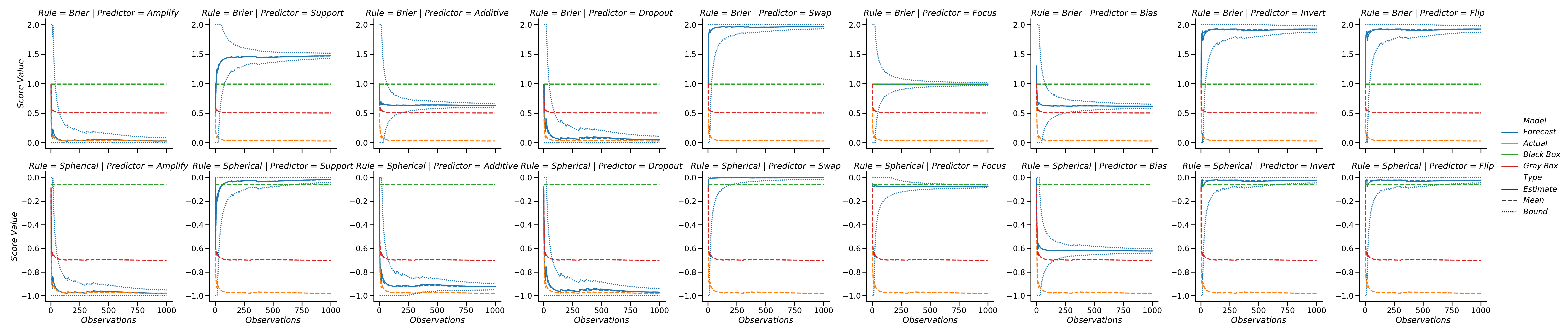} \\
        \raisebox{1\height}{\rotatebox{90}{nand-5-2}} & \includegraphics[width=0.9\linewidth]{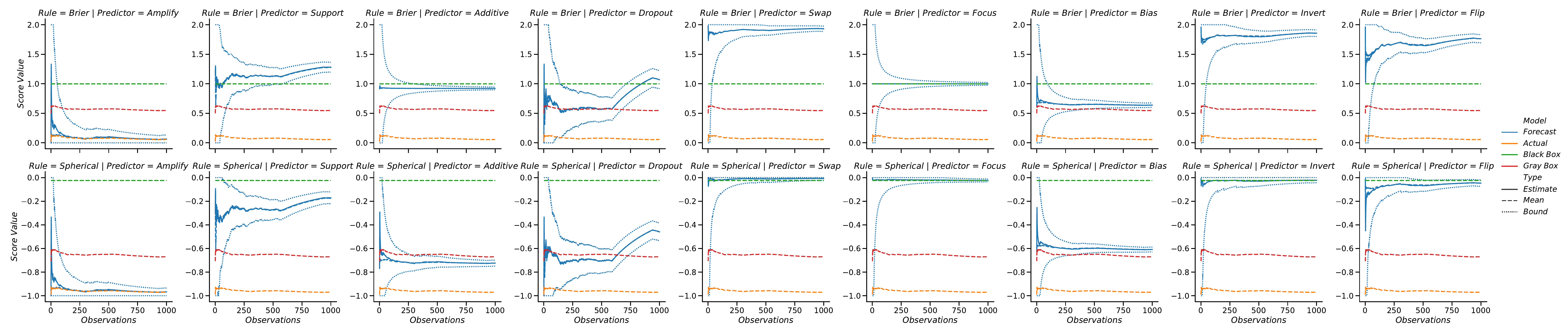} \\
        \raisebox{1\height}{\rotatebox{90}{quantiles}} & \includegraphics[width=0.9\linewidth]{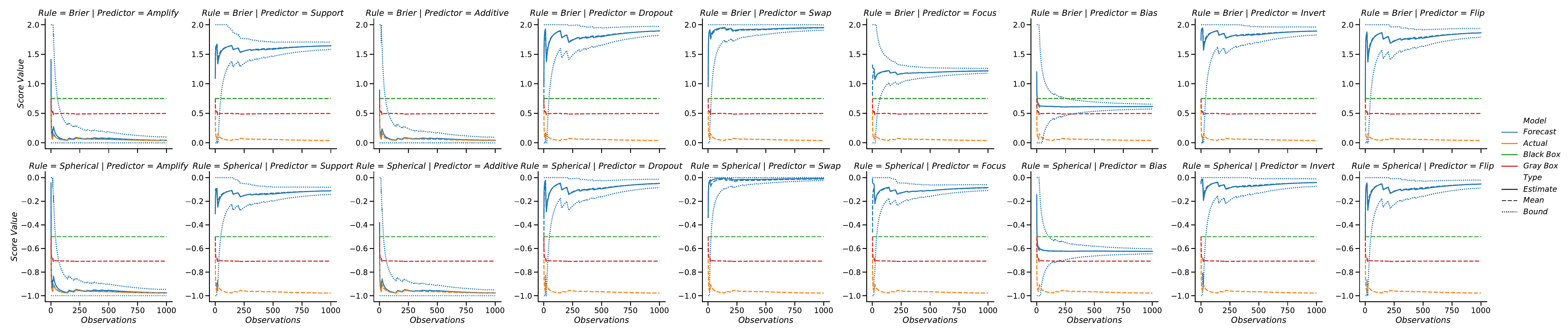} \\
        \raisebox{1\height}{\rotatebox{90}{test-cond}} & \includegraphics[width=0.9\linewidth]{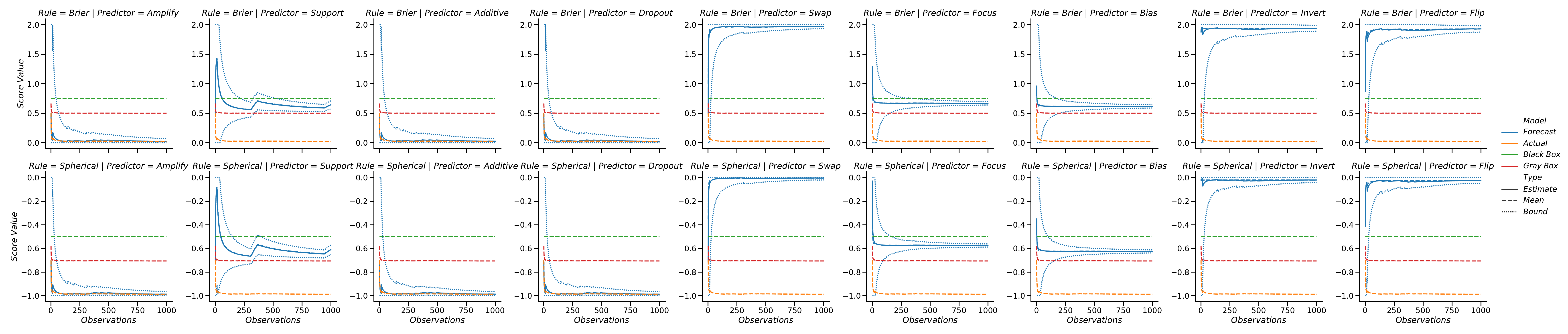}
    \end{tabular}
    \caption{Example executions of the average alignment monitor.}
    \label{fig:average_prism_2}
\end{sidewaysfigure}

\begin{sidewaysfigure}
    \centering
    \begin{tabular}{@{}p{0.2cm} c@{}}
        \raisebox{1\height}{\rotatebox{90}{brp-16-2}} & \includegraphics[width=0.9\linewidth]{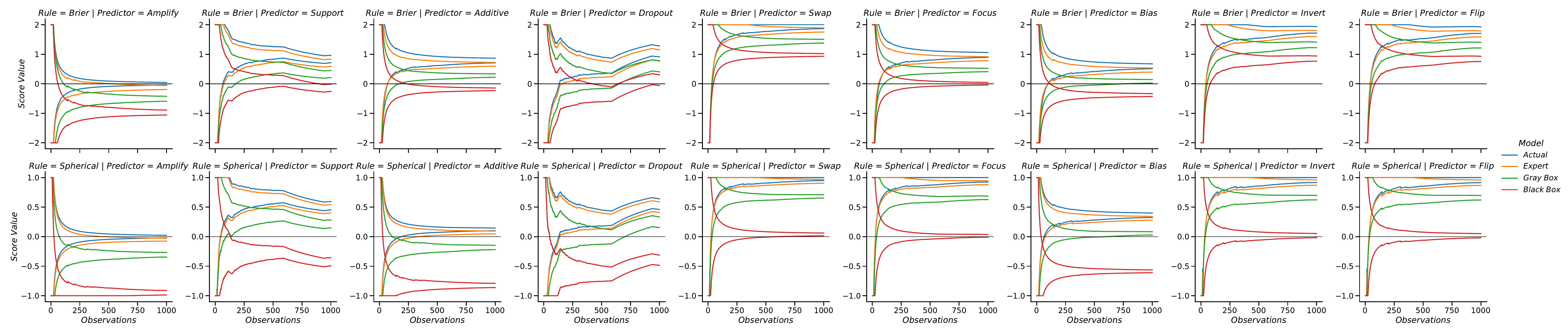} \\
        \raisebox{1\height}{\rotatebox{90}{crowds-4-3}} & \includegraphics[width=0.9\linewidth]{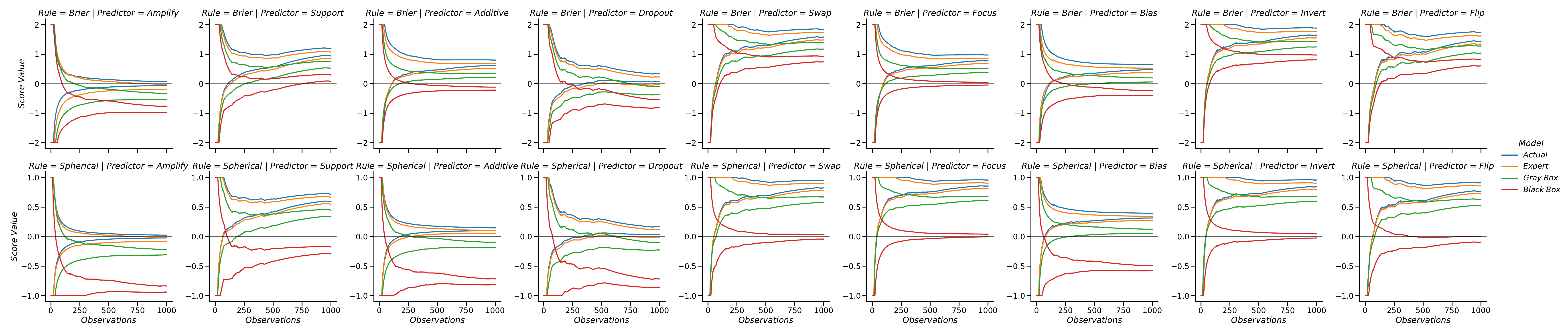} \\
        \raisebox{1\height}{\rotatebox{90}{crowds-5-5}} & \includegraphics[width=0.9\linewidth]{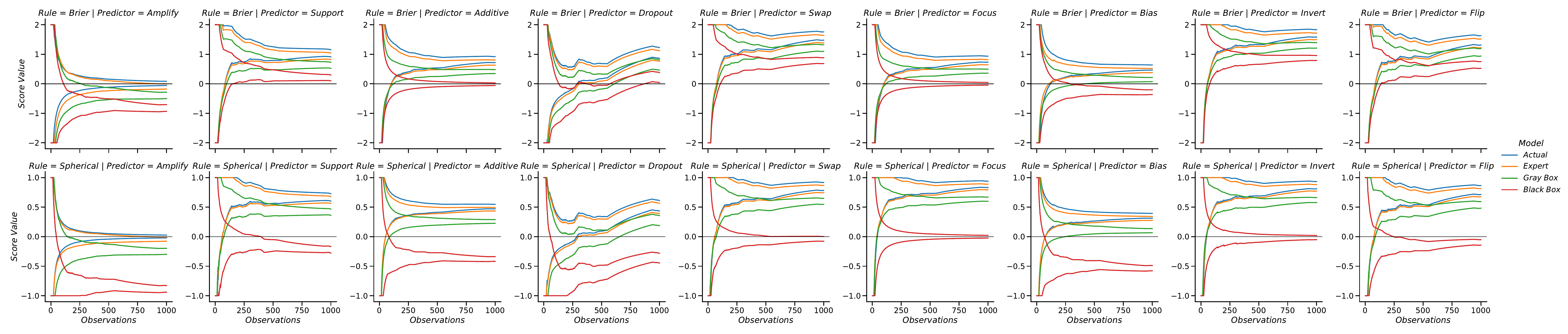} \\
        \raisebox{2\height}{\rotatebox{90}{die}} & \includegraphics[width=0.9\linewidth]{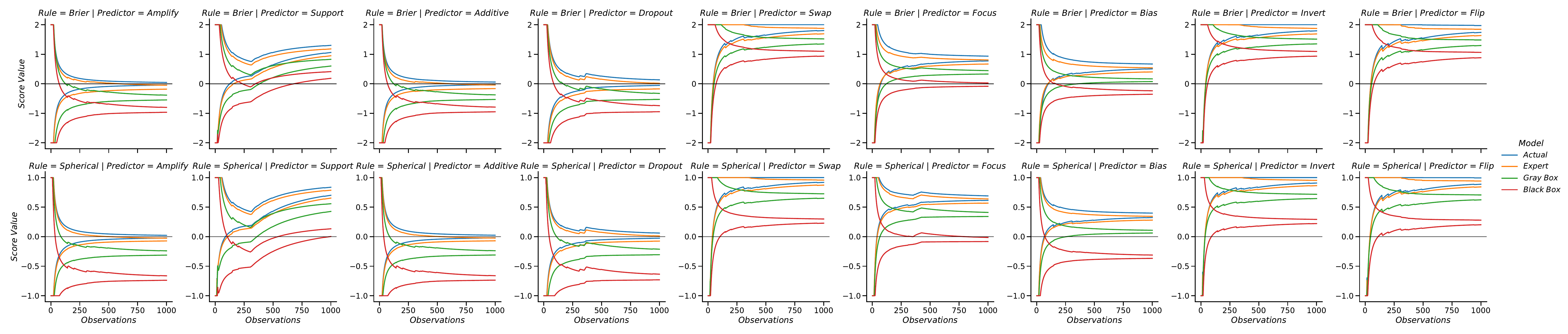} \\
    \end{tabular}
    \caption{Example executions of the differential alignment monitor.}
    \label{fig:diff_prism_1}
\end{sidewaysfigure}

\begin{sidewaysfigure}
    \centering
    \begin{tabular}{@{}p{0.2cm} c@{}}
        \raisebox{1\height}{\rotatebox{90}{leader-3-5}} & \includegraphics[width=0.9\linewidth]{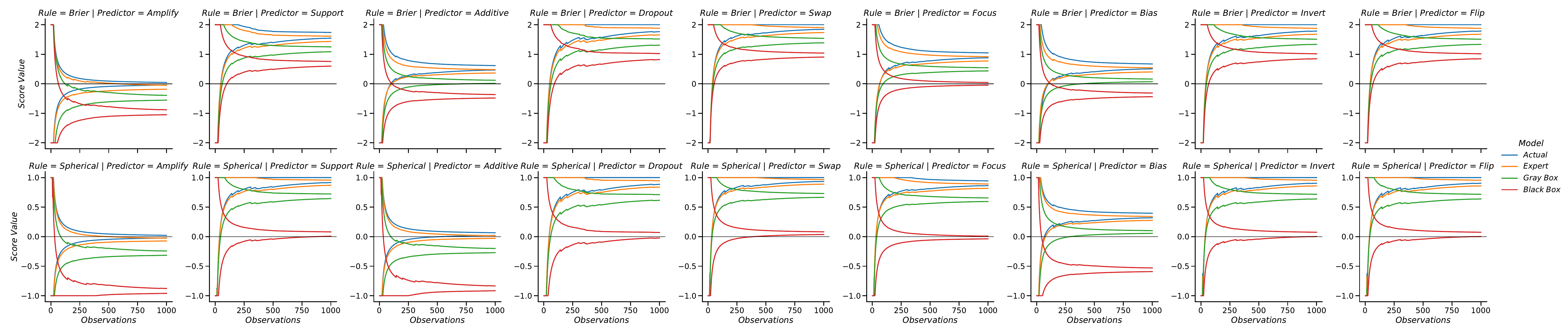} \\
        \raisebox{1\height}{\rotatebox{90}{nand-5-2}} & \includegraphics[width=0.9\linewidth]{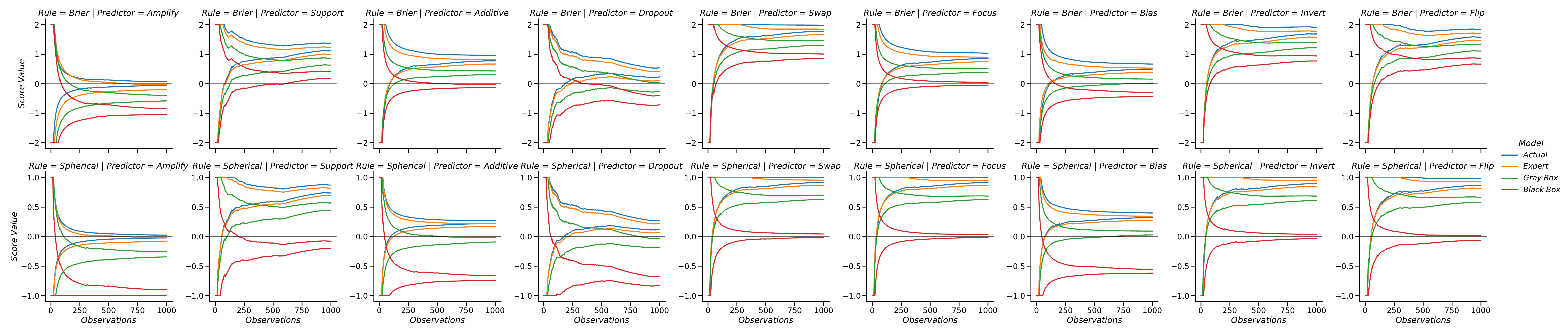} \\
        \raisebox{1\height}{\rotatebox{90}{quantiles}} & \includegraphics[width=0.9\linewidth]{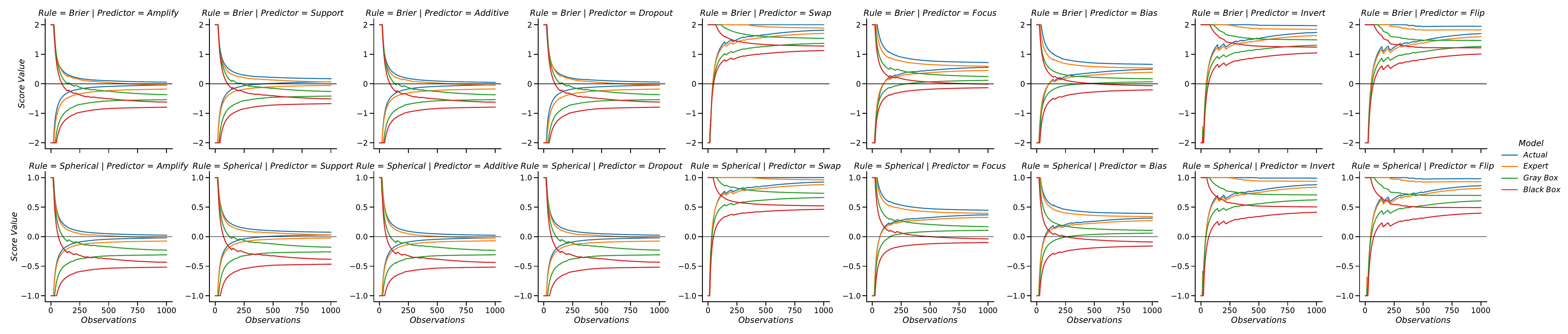} \\
        \raisebox{1\height}{\rotatebox{90}{test-cond}} & \includegraphics[width=0.9\linewidth]{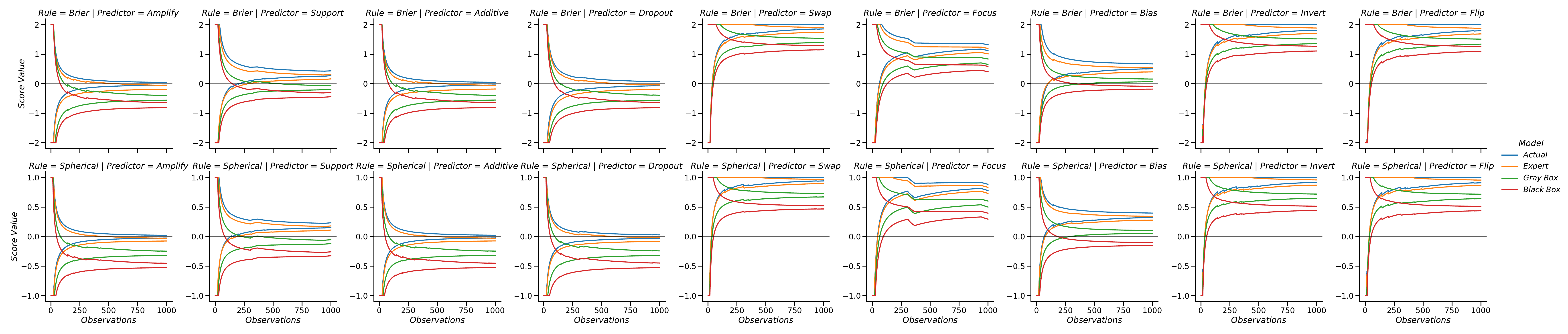}
    \end{tabular}
    \caption{Example executions of the differential alignment monitor.}
    \label{fig:diff_prism_2}
\end{sidewaysfigure}

\newpage

\begin{table}[p]
    \small
    \begin{tabular}{llllll}
\toprule
 & Model & Actual & Black  Box & Expert & Gray  Box \\
Benchmark & Predictor &  &  &  &  \\
\midrule
\multirow[t]{9}{*}{Brp  (16,2)} & Additive & $70.6 \pm 17.17 $ & $236.0 \pm 1.41 $ & $72.8 \pm 8.56 $ & $175.2 \pm 2.49 $ \\
 & Amplify & $1000.0 \pm 0.0 $ & $57.2 \pm 12.34 $ & $491.0 \pm 45.51 $ & $117.4 \pm 24.76 $ \\
 & Bias & $89.6 \pm 24.12 $ & $119.4 \pm 2.51 $ & $103.8 \pm 14.75 $ & $706.8 \pm 71.3 $ \\
 & Dropout & $135.6 \pm 17.42 $ & $462.2 \pm 151.6 $ & $164.2 \pm 26.58 $ & $942.0 \pm 129.69 $ \\
 & Flip & $39.8 \pm 5.72 $ & $83.0 \pm 9.67 $ & $41.4 \pm 4.39 $ & $53.2 \pm 3.11 $ \\
 & Focus & $55.8 \pm 10.33 $ & $1000.0 \pm 0.0 $ & $55.6 \pm 3.71 $ & $101.2 \pm 0.45 $ \\
 & Invert & $39.0 \pm 5.79 $ & $78.0 \pm 9.46 $ & $40.4 \pm 4.16 $ & $52.0 \pm 2.83 $ \\
 & Support & $59.0 \pm 4.64 $ & $281.6 \pm 40.67 $ & $66.4 \pm 3.78 $ & $104.6 \pm 6.47 $ \\
 & Swap & $30.6 \pm 2.19 $ & $47.0 \pm 0.0 $ & $30.8 \pm 1.1 $ & $36.0 \pm 0.0 $ \\
\cline{1-6}
\multirow[t]{9}{*}{Conditional} & Additive & $1000.0 \pm 0.0 $ & $73.2 \pm 17.17 $ & $478.6 \pm 68.11 $ & $113.0 \pm 23.45 $ \\
 & Amplify & $1000.0 \pm 0.0 $ & $73.2 \pm 18.01 $ & $492.8 \pm 67.61 $ & $114.2 \pm 24.57 $ \\
 & Bias & $89.2 \pm 22.48 $ & $338.6 \pm 1.95 $ & $100.8 \pm 14.17 $ & $390.8 \pm 0.84 $ \\
 & Dropout & $379.6 \pm 365.53 $ & $123.0 \pm 62.66 $ & $515.6 \pm 452.18 $ & $182.6 \pm 104.34 $ \\
 & Flip & $32.2 \pm 4.92 $ & $46.0 \pm 8.22 $ & $33.0 \pm 4.47 $ & $38.2 \pm 4.92 $ \\
 & Focus & $98.6 \pm 10.81 $ & $240.4 \pm 38.29 $ & $122.0 \pm 2.12 $ & $1000.0 \pm 0.0 $ \\
 & Invert & $28.6 \pm 5.81 $ & $40.6 \pm 5.81 $ & $29.4 \pm 5.37 $ & $33.4 \pm 5.37 $ \\
 & Support & $64.8 \pm 59.78 $ & $109.0 \pm 48.74 $ & $80.8 \pm 89.58 $ & $135.0 \pm 142.26 $ \\
 & Swap & $29.2 \pm 4.6 $ & $40.8 \pm 5.22 $ & $30.2 \pm 4.44 $ & $34.8 \pm 5.22 $ \\
\cline{1-6}
\multirow[t]{9}{*}{Crowds (5,5)} & Additive & $99.4 \pm 15.92 $ & $1000.0 \pm 0.0 $ & $100.8 \pm 15.01 $ & $126.6 \pm 15.37 $ \\
 & Amplify & $1000.0 \pm 0.0 $ & $109.0 \pm 18.81 $ & $885.0 \pm 128.43 $ & $244.2 \pm 26.58 $ \\
 & Bias & $115.8 \pm 21.51 $ & $271.8 \pm 23.69 $ & $110.6 \pm 18.11 $ & $428.4 \pm 17.98 $ \\
 & Dropout & $414.6 \pm 355.78 $ & $283.6 \pm 120.96 $ & $459.2 \pm 344.65 $ & $456.0 \pm 280.27 $ \\
 & Flip & $80.8 \pm 18.83 $ & $165.4 \pm 50.14 $ & $82.6 \pm 19.31 $ & $96.6 \pm 25.6 $ \\
 & Focus & $98.2 \pm 15.97 $ & $1000.0 \pm 0.0 $ & $99.8 \pm 15.01 $ & $124.2 \pm 16.27 $ \\
 & Invert & $51.6 \pm 4.22 $ & $68.0 \pm 15.52 $ & $51.2 \pm 3.63 $ & $54.0 \pm 4.36 $ \\
 & Support & $102.2 \pm 48.75 $ & $541.4 \pm 268.04 $ & $107.0 \pm 53.31 $ & $205.2 \pm 159.16 $ \\
 & Swap & $66.2 \pm 8.84 $ & $107.0 \pm 20.53 $ & $66.6 \pm 8.65 $ & $75.8 \pm 11.12 $ \\
\cline{1-6}
\multirow[t]{9}{*}{Die} & Additive & $1000.0 \pm 0.0 $ & $63.0 \pm 16.23 $ & $540.0 \pm 61.62 $ & $120.2 \pm 26.35 $ \\
 & Amplify & $1000.0 \pm 0.0 $ & $63.2 \pm 17.46 $ & $503.4 \pm 70.35 $ & $119.2 \pm 28.15 $ \\
 & Bias & $90.4 \pm 23.71 $ & $159.0 \pm 7.71 $ & $101.4 \pm 14.45 $ & $369.6 \pm 7.4 $ \\
 & Dropout & $622.8 \pm 353.17 $ & $103.8 \pm 44.78 $ & $773.4 \pm 327.12 $ & $221.0 \pm 75.06 $ \\
 & Flip & $39.2 \pm 5.31 $ & $69.8 \pm 16.08 $ & $43.0 \pm 6.63 $ & $49.8 \pm 7.4 $ \\
 & Focus & $60.0 \pm 17.97 $ & $550.2 \pm 293.68 $ & $62.2 \pm 14.75 $ & $93.4 \pm 18.68 $ \\
 & Invert & $31.0 \pm 5.61 $ & $50.0 \pm 8.22 $ & $31.8 \pm 5.17 $ & $34.4 \pm 5.37 $ \\
 & Support & $44.8 \pm 14.01 $ & $107.2 \pm 64.5 $ & $49.2 \pm 19.46 $ & $90.0 \pm 86.16 $ \\
 & Swap & $38.0 \pm 3.46 $ & $66.0 \pm 10.79 $ & $39.6 \pm 3.21 $ & $47.6 \pm 4.56 $ \\
\cline{1-6}
\bottomrule
\end{tabular}

    \caption{First verdict of the differential alignment monitor averaged over 5 runs each 1000 steps.}
    \label{tab:term_1}
\end{table}

\begin{table}[p]
    \small
    \begin{tabular}{llllll}
\toprule
 & Model & Actual & Black  Box & Expert & Gray  Box \\
Benchmark & Predictor &  &  &  &  \\
\midrule
\multirow[t]{9}{*}{Leader (3,5)} & Additive & $84.8 \pm 18.09 $ & $125.0 \pm 1.73 $ & $94.8 \pm 4.15 $ & $351.4 \pm 3.44 $ \\
 & Amplify & $1000.0 \pm 0.0 $ & $54.6 \pm 10.41 $ & $497.4 \pm 70.0 $ & $115.4 \pm 25.52 $ \\
 & Bias & $89.6 \pm 22.98 $ & $126.0 \pm 7.07 $ & $101.4 \pm 14.45 $ & $389.4 \pm 0.55 $ \\
 & Dropout & $1000.0 \pm 0.0 $ & $65.6 \pm 20.44 $ & $719.8 \pm 92.61 $ & $143.4 \pm 35.33 $ \\
 & Flip & $34.2 \pm 4.92 $ & $55.6 \pm 10.41 $ & $35.8 \pm 4.02 $ & $40.0 \pm 4.47 $ \\
 & Focus & $53.6 \pm 10.41 $ & $1000.0 \pm 0.0 $ & $57.0 \pm 5.48 $ & $91.2 \pm 1.64 $ \\
 & Invert & $34.2 \pm 4.92 $ & $55.6 \pm 10.41 $ & $35.8 \pm 4.02 $ & $40.0 \pm 4.47 $ \\
 & Support & $41.8 \pm 6.61 $ & $113.2 \pm 20.66 $ & $44.0 \pm 6.28 $ & $57.6 \pm 8.08 $ \\
 & Swap & $33.2 \pm 2.68 $ & $49.6 \pm 3.58 $ & $34.0 \pm 2.24 $ & $39.0 \pm 2.24 $ \\
\cline{1-6}
\multirow[t]{9}{*}{Nand  (5,2)} & Additive & $84.4 \pm 9.94 $ & $561.6 \pm 10.6 $ & $79.8 \pm 9.36 $ & $130.8 \pm 2.59 $ \\
 & Amplify & $1000.0 \pm 0.0 $ & $84.8 \pm 10.11 $ & $642.0 \pm 27.35 $ & $156.8 \pm 18.54 $ \\
 & Bias & $103.4 \pm 20.56 $ & $156.6 \pm 19.3 $ & $106.8 \pm 14.92 $ & $701.8 \pm 75.35 $ \\
 & Dropout & $168.4 \pm 31.8 $ & $509.8 \pm 275.83 $ & $210.6 \pm 37.56 $ & $485.8 \pm 174.48 $ \\
 & Flip & $55.0 \pm 6.89 $ & $116.8 \pm 10.11 $ & $56.0 \pm 6.2 $ & $73.6 \pm 6.95 $ \\
 & Focus & $78.2 \pm 9.12 $ & $1000.0 \pm 0.0 $ & $74.6 \pm 8.82 $ & $108.2 \pm 3.11 $ \\
 & Invert & $42.6 \pm 5.13 $ & $76.6 \pm 9.45 $ & $42.2 \pm 4.15 $ & $52.0 \pm 3.39 $ \\
 & Support & $77.0 \pm 5.48 $ & $464.2 \pm 132.07 $ & $82.4 \pm 5.37 $ & $133.0 \pm 9.7 $ \\
 & Swap & $40.2 \pm 4.6 $ & $61.8 \pm 9.36 $ & $40.0 \pm 3.94 $ & $45.4 \pm 3.29 $ \\
\cline{1-6}
\multirow[t]{9}{*}{Quantiles} & Additive & $1000.0 \pm 0.0 $ & $82.2 \pm 20.09 $ & $525.6 \pm 64.86 $ & $128.8 \pm 33.57 $ \\
 & Amplify & $1000.0 \pm 0.0 $ & $92.8 \pm 18.38 $ & $594.0 \pm 57.2 $ & $136.2 \pm 32.06 $ \\
 & Bias & $104.2 \pm 28.17 $ & $482.8 \pm 24.35 $ & $114.6 \pm 21.89 $ & $390.2 \pm 11.01 $ \\
 & Dropout & $1000.0 \pm 0.0 $ & $107.0 \pm 33.29 $ & $929.4 \pm 96.12 $ & $184.8 \pm 70.94 $ \\
 & Flip & $46.4 \pm 8.79 $ & $69.8 \pm 22.12 $ & $47.6 \pm 9.76 $ & $54.2 \pm 10.76 $ \\
 & Focus & $94.8 \pm 24.43 $ & $730.0 \pm 62.82 $ & $103.8 \pm 18.29 $ & $307.4 \pm 14.17 $ \\
 & Invert & $40.2 \pm 9.31 $ & $61.0 \pm 14.88 $ & $42.4 \pm 9.79 $ & $47.8 \pm 10.35 $ \\
 & Support & $210.0 \pm 37.44 $ & $156.0 \pm 14.56 $ & $281.2 \pm 41.92 $ & $491.6 \pm 60.16 $ \\
 & Swap & $37.8 \pm 4.76 $ & $48.6 \pm 7.83 $ & $38.8 \pm 4.97 $ & $44.8 \pm 6.02 $ \\
\cline{1-6}
\multirow[t]{9}{*}{crowds-4-3} & Additive & $105.0 \pm 17.25 $ & $289.0 \pm 18.64 $ & $104.6 \pm 17.34 $ & $168.8 \pm 12.54 $ \\
 & Amplify & $1000.0 \pm 0.0 $ & $97.2 \pm 15.42 $ & $757.2 \pm 78.56 $ & $203.8 \pm 22.84 $ \\
 & Bias & $110.8 \pm 23.38 $ & $229.0 \pm 34.83 $ & $108.0 \pm 14.98 $ & $458.8 \pm 18.13 $ \\
 & Dropout & $167.8 \pm 73.16 $ & $319.4 \pm 182.59 $ & $198.2 \pm 108.17 $ & $573.4 \pm 433.51 $ \\
 & Flip & $70.0 \pm 14.07 $ & $127.4 \pm 37.63 $ & $72.0 \pm 15.46 $ & $88.6 \pm 23.16 $ \\
 & Focus & $90.8 \pm 13.77 $ & $1000.0 \pm 0.0 $ & $91.2 \pm 13.48 $ & $110.4 \pm 13.09 $ \\
 & Invert & $46.4 \pm 3.65 $ & $63.6 \pm 14.66 $ & $44.6 \pm 4.72 $ & $48.8 \pm 4.27 $ \\
 & Support & $93.6 \pm 22.96 $ & $425.4 \pm 246.37 $ & $102.6 \pm 28.34 $ & $249.4 \pm 175.94 $ \\
 & Swap & $56.4 \pm 8.5 $ & $95.4 \pm 20.6 $ & $57.8 \pm 10.4 $ & $67.8 \pm 12.36 $ \\
\cline{1-6}
\bottomrule
\end{tabular}

    \caption{First verdict of the differential alignment monitor averaged over 5 runs each 1000 steps.}
    \label{tab:term_2}
\end{table}

\end{document}